\RequirePackage[l2tabu,orthodox]{nag}
\documentclass
[11pt,letterpaper]
{article} 

\usepackage[notes=true,later=false,camera=false]{dtrt}
\usepackage[utf8]{inputenc}
\usepackage{etex}
\usepackage{ stmaryrd }
\usepackage{xspace,enumerate}
\usepackage[T1]{fontenc}
\usepackage[full]{textcomp}
\usepackage[american]{babel}
\usepackage{mathtools}

\usepackage{amsthm}
\usepackage{thmtools}
\usepackage{thm-restate}

\usepackage{empheq}

\hypersetup{
colorlinks=true,
urlcolor=Cerulean,
linkcolor=RoyalBlue,
citecolor=OliveGreen,
linktocpage=true,
}
\renewcommand*\backref[1]{\ifx#1\relax \else (pg. #1) \fi}

\usepackage[capitalise,nameinlink]{cleveref}
\crefname{lemma}{Lemma}{Lemmas}
\crefname{fact}{Fact}{Facts}

\crefname{theorem}{Theorem}{Theorems}
\crefname{mtheorem}{Theorem}{Theorems}
\crefname{itheorem}{Theorem}{Theorems}
\crefname{corollary}{Corollary}{Corollaries}
\crefname{claim}{Claim}{Claims}
\crefname{example}{Example}{Examples}
\crefname{algorithm}{Algorithm}{Algorithms}
\crefname{problem}{Problem}{Problems}
\crefname{definition}{Definition}{Definitions}
\crefname{equation}{Eq.}{Eq.}
\crefname{strategy}{Strategy}{Strategies}

\usepackage{paralist}
\usepackage{turnstile}
\usepackage[framemethod=TikZ]{mdframed}
\mdfsetup{frametitlealignment=\center}
\usepackage{tikz}
\usepackage{caption}
\DeclareCaptionType{Algorithm}
\usepackage{newfloat}

\newtheorem{theorem}{Theorem}[section]
\newtheorem{lemma}[theorem]{Lemma}
\newtheorem*{lemma*}{Lemma}

\newtheorem{proposition}[theorem]{Proposition}
\newtheorem{fact}[theorem]{Fact}

\theoremstyle{definition}
\newtheorem{definition}[theorem]{Definition}
\newtheorem*{definition*}{Definition}
\newtheorem{remark}[theorem]{Remark}

\newtheorem{algorithm}{Algorithm}
\newtheorem{algorithm-thm}[theorem]{Algorithm}

\usepackage[
letterpaper,
top=1.2in,
bottom=1.2in,
left=1in,
right=1in]{geometry}
\usepackage{newpxtext} 
\usepackage{textcomp} 
\usepackage{mathpazo}
\usepackage[scr=rsfso]{mathalfa}
\usepackage{bm} 
\let\mathbb\varmathbb
\usepackage{microtype}

\usepackage{footnotebackref}

\allowdisplaybreaks
\newcommand{\FormatAuthor}[3]{
\begin{tabular}{c}
#1 \\ {\small\texttt{#2}} \\ {\small #3}
\end{tabular}
}

\newcommand{\R}{{\mathbb R}}
\newcommand{\N}{{\mathbb N}}

\newcommand{\eps}{\varepsilon}
\newcommand{\defeq}{\coloneqq}
\newcommand{\F}{{\mathbb F}}
\newcommand{\cF}{{\mathcal F}}
\newcommand{\E}{{\mathbb E}}

\newcommand{\cV}{{\mathcal V}}

\newcommand{\ip}[1]{\langle #1 \rangle}

\newcommand{\Fits}{\{\pm 1\}}
\newcommand{\Var}{\mathrm{Var}}

\newcommand{\pE}{\tilde{\E}}
\newcommand{\cH}{\mathcal H}

\newcommand{\cB}{\mathcal B}

\newcommand{\cS}{\mathcal S}
\newcommand{\poly}{\mathrm{poly}}
\newcommand{\val}{\mathrm{val}}
\newcommand{\coloneqq}{:=}
\newcommand{\bias}{\mathrm{bias}}

\newcommand{\mper}{\,.}
\newcommand{\mcom}{\,,}

\newcommand{\aio}{a_{s_{1}}}

\newcommand{\cD}{\mathcal D}

\renewcommand{\emptyset}{\varnothing}
\renewcommand{\geq}{\geqslant}
\renewcommand{\leq}{\leqslant}
\renewcommand{\epsilon}{\varepsilon}
\newcommand{\pldist}{\cD_{\mathrm{pl}}}
\newcommand{\nulldist}{\cD_{\mathrm{null}}}

\newcommand{\bjk}{B_{k,v}^{(j)}}
\newcommand{\wvx}{w_{v,x}}

\usepackage{macros}
\begin{document}

\title{Strongly Refuting Semirandom Linear Systems in Subexponential Time}
 \author{
 \begin{tabular}{cc}
 \FormatAuthor{Pravesh K. Kothari}{kothari@cs.princeton.edu}{Princeton University} 
  \FormatAuthor{Andrew D. Lin\thanks{This material is based upon work supported by a Princeton AI Lab Seed Grant.}}{andrewlin@princeton.edu}{Princeton University} 
  \FormatAuthor{Peter Manohar\thanks{This material is based upon work supported by the National Science Foundation
under Grant No.\ DMS-2424441.}}{pmanohar@ias.edu}{The Institute for Advanced Study} 
  \end{tabular}
  }
\maketitle
\vspace{-1em}

\begin{abstract}
In this paper, we consider the problem of refuting $\F_2$-linear equations with random right-hand sides. Formally, we give a \emph{sub-exponential} $2^{O(n/\log n)}$-time randomized algorithm that takes as input an arbitrary $m \times n$ matrix $A$ and a uniformly random vector $b \in \F_2^m$, and outputs a witness showing that no assignment satisfies more than a $\frac{1}{2}+\epsilon$ fraction of the equations provided that $m \geq 2^{O(n/\log n)}$. The setting above is the \emph{semirandom} refutation variant of the famous work~\cite{BKW03} that gives a $2^{O(n/\log n)}$-time search algorithm for the learning parity with noise (LPN) problem with $m \geq 2^{O(n/\log n)}$ equations.

Building on the search algorithm of~\cite{Lyub05}, we also give a $2^{O(n/\log \log n)}$-time refutation algorithm that succeeds with only $m \geq n^{1 + \gamma}$ equations, for a small constant $\gamma$. 
Finally, we prove that our algorithm is \emph{not} captured by the sum-of-squares hierarchy by proving a degree-$\Omega(n)$ sum-of-squares lower bound, showing that ``\cite{BKW03}-style'' algorithms achieve better runtime than can be done under sum-of-squares. We thus obtain a natural example of a noise-tolerant signal recovery problem that exhibits a nontrivial gap between the performance of efficient algorithms and that of those based on the sum-of-squares hierarchy.
\end{abstract}
\vspace*{\fill}
\paragraph{Statement of AI Use.}
The main content of this paper, e.g., the central question, high-level proof strategy, and technical content, is due solely to the human authors. We have only used AI models for literature review and proofreading. This statement is intended solely to describe the role of AI in the research
process and should not be read as an endorsement of such systems or of the companies that produce them.

\thispagestyle{empty}
\clearpage
 \microtypesetup{protrusion=false}
  \tableofcontents{}
  \microtypesetup{protrusion=true}

\thispagestyle{empty}
\clearpage

\pagestyle{plain}
\setcounter{page}{1}
\section{Introduction}
\label{sec:intro}
In the \emph{Learning Parity with Noise} (LPN) problem, we are given as input a system of inhomogeneous linear equations in $n$ variables and $m$ equations of the form $Ax = b$, where $A \in \F_2^{m \times n}$, $x \in \F_2^n$ are the variables, and $b \in \F_2^m$, and the goal is to distinguish between the following two distributions $\nulldist$ and $\pldist$. In $\nulldist$, $A$ and $b$ are drawn uniformly at random; in $\pldist$, $A$ is drawn uniformly at random, and $b$ is chosen so that $b = Ax^* + e$, where $x^* \in \F_2^n$ is a hidden (or planted) assignment and $e \in \F_2^m$ is sampled by setting each $e_i = 1$ with probability $1/2 - \eps$, independently. The hardness of LPN was first used as a cryptographic hardness assumption in \cite{BlumFKL94}, where it was used to construct pseudorandom generators, and has since become a common hardness assumption. Further applications include secret-key encryption \cite{GilbertRS08}, message authentication \cite{HopperB01,JuelsW05}, and (non-interactive) zero-knowledge proofs \cite{JainKPT12, DaoJJ24}.

The LPN distinguishing task can be trivially solved in $2^n$ time by enumerating over all $x \in \F_2^n$; the strength of it as a cryptographic assumption therefore depends on whether more efficient algorithms exist. A breakthrough work of \cite{BKW03} gave an algorithm for the distinguishing task that runs in $\poly(m)$ time given $m=2^{O(n/\log n)}$ equations, i.e., slightly subexponential time and samples. In fact, the algorithm additionally solves the harder ``search task'': when the equations are drawn from $\pldist$ with the planted assignment $x^*$, the algorithm recovers $x^*$ with high probability. Furthermore, the algorithm succeeds in the harder ``semirandom'' setting, where the matrix $A$ is permitted to be arbitrary, while the distribution over the ``right-hand sides'' $b$ remains the same.\footnote{We note that in this "semirandom" regime it is not always possible to recover $x^*$ uniquely, as for example, it is possible that the $j$-th column of $A$ is $0^m$ and thus we have no information about $x_j$. Instead, the algorithm recovers an assignment $\hat{x}$ that is consistent with an $\frac{1}{2}+\Omega(\eps)$-fraction of the samples.}
A follow-up work by \cite{Lyub05} extended the  \cite{BKW03}~algorithm in the random setting to only require $m=n^{1+\gamma}$ equations, for any small constant $\gamma$, at the cost of increasing the runtime of the algorithm to $2^{O(n/\log\log n)}$ time. The key idea in~\cite{Lyub05} is to use the original $m$ equations to generate $2^{O(n/\log \log n)}$ synthetic equations and then run the  \cite{BKW03}~algorithm on the synthetic equations. These two results remain the state-of-the-art after over 20 years.

While the algorithms of~\cite{BKW03,Lyub05} for LPN solve both the distinguishing task and search task, for average-case problems there are typically three variants of the problem that are studied: distinguishing, search, and refutation. The distinguishing task asks an algorithm to be able to distinguish between the two distributions $\nulldist$ and $\pldist$, while in the search task, the goal is to find the hidden assignment $x^*$ given a sample from $\pldist$. In the refutation task, however, the goal is to output a certificate that there is no assignment $\hat{x} \in \F_2^n$ that satisfies more than $1/2 + \eps$-fraction of the equations, given a sample from $\nulldist$.

To the best of our knowledge, there has so far been no work on studying the refutation problem for LPN and no nontrivial algorithm is known. However, the refutation problem for the sparse version of the problem, sparse LPN (also commonly called $k$-XOR), has been extensively studied, both on the side of algorithms~\cite{GoerdtL03, CojaGL07,AllenOW15,BarakM16,RaghavendraRS17,AbascalGK21,GuruswamiKM22,dOrsiT23} and lower bounds in restricted computational models~\cite{Grigoriev01,Sch08, BCK15, KothariMOW17}.
Like its dense counterpart, sparse LPN has many applications in cryptography, beginning with 
\cite{Alekhnovich03}, where it was used to construct a public-key cryptographic scheme, and has since been used in signature schemes \cite{IshaiKOS08}, indistinguishability obfuscation ($i\mathcal{O}$) \cite{RagavanVV25}, and multi-party homomorphic secret sharing \cite{DaoIJL23}, among other applications.

This state of affairs raises the following two natural questions: (1) are there algorithms for the refutation version of LPN that are analogues of~\cite{BKW03,Lyub05}, and (2) given that to date, there has been no asymptotic improvements to the algorithms of~\cite{BKW03,Lyub05}, can we provide some evidence that these algorithms may be optimal?

In this work, we resolve the first question by giving the first subexponential-time refutation algorithm for random and semirandom linear systems. As we have mentioned, the semirandom case is the more challenging setting where the equation matrix $A$ is arbitrary, and we require that our algorithm succeeds for all choices of $A$. Our algorithm's runtime and sample complexity matches the guarantees of the search algorithms in \cite{BKW03} and \cite{Lyub05}. Furthermore, even in the setting where $m=n^{1+\gamma}$, our refutation algorithm works for semirandom linear systems, unlike the search algorithm of \cite{Lyub05}, which relies heavily on the randomness of $A$.

A natural approach to the second question is to leverage the approach taken in the study of sparse LPN and prove lower bounds for the \emph{refutation} problem in restricted computational models, such as, e.g., the sum-of-squares semidefinite programming hierarchy. As a second result, we show that unlike sparse LPN, where the best-known algorithms are all ``captured'' by the powerful sum-of-squares semidefinite programming hierarchy, our refutation algorithms for (dense) LPN are \emph{not} captured by the sum-of-squares hierarchy. We do this by proving a degree-$\Omega(n)$ sum-of-squares lower bound for refuting semirandom LPN instances. As ``degree'' in the sum-of-squares hierarchy roughly corresponds to runtime, this is typically interpreted as a $2^{\Omega(n)}$-time lower bound, a lower bound which is violated by our algorithm. Thus, any approach to proving lower bounds for dense LPN will require going beyond the standard sum-of-squares framework.

\subsection{Our results}
Before we formally state our results, we introduce a few definitions. We start by defining random and semirandom linear systems.
\begin{definition}
    A \emph{linear system in $\F_2^n$} with $m$ equations is a pair $(A,b)\in \F_2^{m\times n}\times \F_2^m$, which represents the system of inhomogeneous linear equations given by $Ax = b$. We define the \emph{value} $\val(A,b)$ of $(A,b)$ to be the maximum over all $x\in \F_2^n$ of the fraction of $i\in [m]$ such that $\ip{a_i,x}=b_i$, where $a_i$ denotes the $i$-th row of $A$, i.e., it is the maximum fraction of satisfiable equations over all assignments. The  \emph{bias} of $(A,b)$ is defined as $\bias(A,b)\coloneqq \val(A,b)-\frac{1}{2}$. We say that $(A,b)$ is a \emph{random linear system} if $A$ and $b$ are uniform and independent random elements of $\F_2^{m\times n}$ and $\F_2^m$, respectively, and that $(A,b)$ is a \emph{semirandom linear system} if $b$ is a uniformly random element of $\F_2^m$ independent of $A$, which can be arbitrary.
\end{definition}

As our first result, we give an algorithm for refuting semirandom linear systems with parameters similar to the search algorithm of~\cite{BKW03}.

\begin{restatable}{theorem}{semirandomfull}
    \label{thm:semirandomfull}
    Let $m$ be an integer such that $m\geq2^{3n/\log n}$. There is an algorithm that takes as input a linear system $(A,b)$ with $m$ equations and in  $\poly(m)$ time outputs a quantity $\cV\in [\frac12,1]$ such that:
    \begin{itemize}
        \item $\val(A,b)\leq \cV$,
        \item If $(A,b)$ is a semirandom linear system, then for all $2^{-n^{1/20}}\leq \eps\leq\frac12$ we have $\cV\leq \frac{1}{2}+\eps$ with high probability. 
    \end{itemize}
\end{restatable}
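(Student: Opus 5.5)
The plan is to turn the \cite{BKW03} ``bucket and add'' reduction into an iterated Cauchy--Schwarz certificate. For $x\in\F_2^n$ write $f_i(x)=(-1)^{b_i+\ip{a_i,x}}$ and $F(x)=\sum_{i\in[m]} f_i(x)$, so that $\val(A,b)=\frac12+\frac{1}{2m}\max_x F(x)$. It therefore suffices to compute in $\poly(m)$ time a number $\Phi_t\geq 0$ with $\max_x F(x)^{2^t}\leq \Phi_t$ deterministically, and then output $\cV=\frac12+\frac{1}{2m}\Phi_t^{1/2^t}$ (capped at $1$). This handles the first bullet automatically, so all the work is in showing that $\Phi_t$ is small when $b$ is uniform.

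\textbf{Constructing $\Phi_t$.} Split the coordinates into $t$ blocks $B_1,\dots,B_t$ of size $n/t$, with $t\approx \log n/c$ for a suitable constant $c$ chosen so that $2^{n/t}$ is a small power of $m$ and $2^t=n^{\Theta(1)}$. Set $G_0(x)=F(x)$, viewed as a signed sum of characters $\chi_a(x)$ with coefficients that are monomials in $(-1)^{b_i}$. At round $s$, group the terms of $G_{s-1}$ by the restriction of $a$ to $B_s$, giving $G_{s-1}=\sum_{u\in\F_2^{B_s}} G_{s-1}^{(u)}$. Cauchy--Schwarz gives $G_{s-1}^2\leq 2^{n/t}\sum_u (G_{s-1}^{(u)})^2$, so we set $G_s(x)=2^{n/t}\sum_u (G_{s-1}^{(u)}(x))^2$. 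Expanding the square pairs up terms within a bucket, and every resulting character is supported off $B_1\cup\dots\cup B_s$. After $t$ rounds, $G_t$ does not depend on $x$, and $F(x)^{2^t}\leq G_t=:\Phi_t$ for every $x$. By construction $\Phi_t$ is a sum of squares, hence nonnegative.

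To compute $\Phi_t$ efficiently, I would avoid expanding the $m^{2^t}$ monomials. Instead I would track, at each level, the polynomial $G_s$ as a function on the remaining $n(1-s/t)$ coordinates, via its Fourier coefficients. There are at most $2^{n}$ of these, which is too many, so the realistic route is to represent each bucket-sum $G_{s-1}^{(u)}$ through its Fourier coefficients restricted to the current block structure and use the identity $\sum_x G^2 = 2^n\sum_a \hat G(a)^2$ only at the last level. This computational bookkeeping is one place I would have to be careful. A cleaner alternative is to perform the bucketing on rows exactly as BKW does: square once per level, but keep only the $\leq m^{O(1)}$ pair-sums per bucket, as in the BKW table.

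\textbf{Bounding $\Phi_t$ for random $b$.} Since $\Phi_t\geq 0$, Markov's inequality gives $\Phi_t\leq n\,\E_b[\Phi_t]$ with probability $1-1/n$. So it suffices to bound $\E_b[\Phi_t]$, which is $2^{(n/t)(2^t-1)}$ times the number of ``closed'' $2^t$-tuples of row indices in which every $b_i$ appears an even number of times. By induction on $s$, I expect this count to be at most $(C\, 2^t)^{2^{t}} m^{2^{t-1}}$-ish: these are the perfect-matching-type pairings, and the $2^t$-fold factorial is harmless because $2^t=n^{O(1)}$. Granting this, we get $\max_x F(x)/m\lesssim \bigl(n\,2^{(n/t)2^t}(C2^t)^{2^t}m^{-2^{t-1}}\bigr)^{1/2^t}\approx 2^{n/t}\,\poly(n)\,m^{-1/2}$. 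This is at most $\eps$ once $m\geq 2^{3n/\log n}$, $t\approx\log n/c$, and $\eps\geq 2^{-n^{1/20}}$. The last condition absorbs the $n^{1/2^t}$ and polynomial losses.

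I expect the main obstacle to be this combinatorial count of even tuples together with the Cauchy--Schwarz losses. In particular, one must check that the bucket losses $2^{n/t}$ per level do not compound beyond $2^{(n/t)2^t}$ in the $2^t$-th power. One must also check that the diagonal terms, which survive in expectation, are dominated by $m^{2^{t-1}}$ rather than by something larger caused by heavy buckets; this can happen since $A$ is arbitrary and buckets may be very unbalanced. To handle unbalanced buckets I would first try the weighted Cauchy--Schwarz $(\sum_u g_u)^2\leq(\sum_u w_u)\sum_u g_u^2/w_u$ with $w_u$ equal to the bucket sizes, which normalizes each bucket's diagonal contribution.
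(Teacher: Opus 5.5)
Your certificate $\Phi_t$ is valid, and the soundness part is easier than you fear. The Cauchy--Schwarz factor is at most the number of nonempty buckets, $2^{s}$ with $s=n/t$, whatever $A$ is. Every length-$2^t$ index sequence in which each index occurs an even number of times is specified by a perfect matching of the positions plus one index per matched pair. Hence $\E_b[\Phi_t]\le 2^{s(2^t-1)}(2^t m)^{2^{t-1}}$ for every $A$, and Markov gives bias at most $n^{2^{-t}}2^{s}\sqrt{2^t/m}$, which suffices. Unbalanced buckets are not the obstacle.

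\textbf{The gap.} The genuine gap is the step you deferred: $\Phi_t$ cannot be computed in $\poly(m)=2^{O(n/\log n)}$ time, and this is the whole content of the theorem.
\begin{itemize}
\item \emph{Direct expansion.} As a sum over index tuples, round $k+1$ has $\sum_u |S^{(k)}_u|^2\ge |S^{(k)}|^2/2^s$ terms. So $\Phi_t$ has at least $2^{s}(m/2^{s})^{2^t}\ge m^{2^{t-1}}$ terms once $2^s\le\sqrt m$, which your soundness bound needs. With $2^t=n^{\Theta(1)}$ this is far beyond $2^n$; it is exactly the blow-up the paper flags in its overview.
\item \emph{Fourier tracking.} By Parseval on each block, $G_k=2^{s}\,\E_{y_k\in\F_2^{B_k}}[G_{k-1}^2]$, so $\Phi_t$ is an iterated box-norm-type average of $F$ over all of $\F_2^n$. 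For random $A$ the Fourier support of $G_k$ really grows like $\min\{2^s(m/2^s)^{2^k},2^{n-ks}\}$, which is $2^{n-o(n)}$ around $2^k\approx\log n$. This is essentially brute force.
\item \emph{Your ``cleaner alternative''.} Keeping all pairs per bucket is the same computation. Keeping only some pairs, or doing BKW's add-a-representative step, destroys the inequality $\max_x F(x)^{2^t}\le\Phi_t$. The paper's footnote in the overview gives a system that passes such a BKW-style test even though $\val(A,b)$ is large.
\end{itemize}

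\textbf{The missing idea: sub-bucketing.} Here the paper's $r=\frac12\log n$ plays the role of your $t$.
\begin{itemize}
\item In round $k$, trim each bucket to a multiple of $L=2^{n/r^2}$ and randomly split it into sub-buckets of exactly $L$ equations.
\item Add only pairs of \emph{distinct} equations inside a sub-bucket.
\item Equal sub-bucket sizes mean Cauchy--Schwarz over sub-buckets still squares the bias, $\bias(A,b,S_{k+1})\ge\bias(A,b,S_k)^2$ (\Cref{lem:eachiteration}).
\item The system now grows only by a factor of about $L$ per round, so $|S_r|\le M2^{n/r}$.
\end{itemize}

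\textbf{What this costs.}
\begin{itemize}
\item The normalizer is now the number of sub-buckets rather than $2^s$, so diagonal pairs must be dropped. The squaring step then only holds while the bias is not too small (the paper needs $\bias(A,b,S_k)\ge 2^{-n/4r^2}$). This is the source of the output floor $2^{-n/r^22^{r+1}}$, which is harmless for $\eps\ge 2^{-n^{1/20}}$.
\item Your crude even-tuple count is far too weak against these much larger normalizers. Instead one must show that the final right-hand sides $f_r(T)=\sum_{i\in g_r(T)}b_i$ are nonconstant and pairwise independent apart from few coincidences.
\item The algorithm enforces this by tracking $g_k,h_k$ and by the checks in \Cref{step:elimecs}, \Cref{step:prune}, \Cref{step:finalcount} and \Cref{step:finalelimec}.
\item \Cref{lem:gkibound,lem:fewdeletions,lem:fewdeletions2} show the \Cref{step:prune} and \Cref{step:finalcount} checks pass with high probability. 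The no-even-cover hypothesis on $\cT$ handles \Cref{step:elimecs} and \Cref{step:finalelimec}, and Chebyshev (\Cref{lem:epsbound}) finishes.
\end{itemize}
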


Note that \cref{thm:semirandomfull} immediately gives an algorithm for random linear systems as well, as the case when $A$ is random is a special case of a semirandom linear system.

For our second result, we extend our algorithm so that we only require a superlinear number of equations, at the cost of an increase in runtime. This gives a refutation algorithm with parameters similar to the search algorithm of~\cite{Lyub05}.
\begin{restatable}{theorem}{semirandomgeneral}
    \label{thm:semirandomgeneral}
        Let $\gamma > 0$ and let $m=n^{1+\gamma}$. There is an algorithm taking in a linear system $(A,b)$ with $m$ equations and in $2^{\Theta(n/\log \log n)}$ time outputs a quantity $\cV\in (\frac{1}{2},1]$ such that:
    \begin{itemize}
        \item $\val(A,b)\leq \cV$,
        \item If $(A,b)$ is a semirandom linear system, then for any $\frac{3}{\log^{4/3}n}\leq\eps\leq \frac12$, we have $\cV\leq \frac{1}{2}+\eps$ with high probability.
    \end{itemize}
\end{restatable}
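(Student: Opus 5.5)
The plan is to reduce \cref{thm:semirandomgeneral} to the algorithm and analysis behind \cref{thm:semirandomfull}, in the spirit of \cite{Lyub05}. We first manufacture a much larger system of synthetic equations from the $m=n^{1+\gamma}$ given ones by taking $k$-fold sums of rows. We then refute the synthetic system, and finally transfer the refutation back to $(A,b)$ using a moment inequality. The main difference from \cite{Lyub05} is that we cannot use the randomness of $A$. So both the transfer step and the completeness step must use only the randomness of $b$ together with the fact that the combinations of equations we form are genuinely random.

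\textbf{Construction and soundness.} For $x\in\F_2^n$ write $\chi_i(x)=(-1)^{\ip{a_i,x}+b_i}$ and $\delta(x)=\frac1m\sum_i\chi_i(x)$, so that the fraction of equations satisfied by $x$ is $\frac12+\frac12\delta(x)$. Fix an even integer $k$, to be chosen later. Sample $N$ independent uniform ordered $k$-tuples $T=(i_1,\dots,i_k)\in[m]^k$, and form the synthetic equations $\ip{a_{i_1}+\dots+a_{i_k},x}=b_{i_1}+\dots+b_{i_k}$. For each fixed $x$, the expected synthetic correlation is $\E_T\prod_{j}\chi_{i_j}(x)=\delta(x)^k\ge 0$. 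By a Chernoff bound and a union bound over all $2^n$ assignments, with $N\gg n/\eta^{2}$ the empirical synthetic correlation is within $\eta$ of $\delta(x)^k$ for every $x$ simultaneously. This sampling step depends only on our own coins, not on $b$.

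Now run the algorithm of \cref{thm:semirandomfull} on the synthetic system, with $n$ variables and $N$ equations, and suppose it outputs $\cV'$. Every $x$ then satisfies $\delta(x)^k\le 2(\cV'-\frac12)+\eta$, since the synthetic correlation of $x$ is at most $2(\cV'-\tfrac12)$. So we may output $\cV=\frac12+\frac12\bigl(2(\cV'-\frac12)+\eta\bigr)^{1/k}$, which is always a valid upper bound on $\val(A,b)$ on the event that the sample of tuples is good. To make the bound hold deterministically rather than with high probability, I would instead enumerate all tuples when this is affordable. Alternatively, I would accept a high-probability certificate over our own coins.

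\textbf{Parameters.} We need $N\ge 2^{3n/\log n}$, or more precisely the BKW-type block structure adapted to $N=2^{\Theta(n/\log\log n)}$. This means roughly $\log\log n$ elimination rounds, so the synthetic completeness bias behaves like $\eps'\approx 2^{-\Theta(\log n)}$. For the final bound $\cV\le\frac12+\eps$ we need $\eps'^{\,1/k}\lesssim\eps$. With $\eps\ge 3/\log^{4/3}n$, this forces $k=\Theta(\log n/\log\log n)$, and also $\eta\le \eps^k$, which is $2^{-O(\log n)}$ and hence polynomially small and affordable. The runtime is then $\poly(N)=2^{\Theta(n/\log\log n)}$. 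The BKW analysis may need to be rerun with block size $n/\log\log n$ rather than quoted verbatim from \cref{thm:semirandomfull}, whose hypothesis $m\ge 2^{3n/\log n}$ may be too weak or too strong depending on constants.

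\textbf{Completeness, the main obstacle.} The synthetic right-hand sides $b_T$ are not independent uniform bits, so \cref{thm:semirandomfull} cannot be applied as a black box. I will open up its completeness argument. The refutation there certifies a bound by combining equations so that their left-hand sides cancel and then inspecting the resulting right-hand-side bits. Such a bit is a $\pm1$ product of the original $b_i$ over the multiset union of the tuples used, taken mod $2$. Each final combination involves $2^{\Theta(\log\log n)}=\poly\log n$ tuples, hence $k\cdot\poly\log n$ original indices drawn at random from $m=n^{1+\gamma}$ indices. I would show two things. First, with high probability each combination has a nonempty odd-multiplicity support, which is a birthday-type bound. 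Second, the family of combinations that the completeness proof needs to be (nearly) independent or to have small bias is nearly linearly independent over the original coordinates. Together these show that the $\pm1$ statistics the analysis sums have the same mean and almost the same variance as in the independent case. I expect this near-independence estimate to be the hardest step, and the place where the constraint $m=n^{1+\gamma}$ and the lower bound on $\eps$ actually bind. My first attempt would be a second-moment computation over pairs of combinations, bounding the probability that their supports coincide mod $2$.
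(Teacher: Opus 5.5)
Your proposal has a genuine gap, and it lies in the choice of $k$, not only in the completeness step you flag. With $k=\Theta(\log n/\log\log n)$ there are only $m^k=2^{\Theta(\log^2 n/\log\log n)}$ distinct $k$-tuples. You need $N=2^{\Theta(n/\log\log n)}$ synthetic equations, so your sample consists almost entirely of repeated equations. Repeats are fatal for BKW-style elimination: identical synthetic equations always fall into the same bucket, and combining them gives $0=0$ with right-hand side deterministically $0$.

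This already breaks uniformly random $A$. There $m^{2k}\ll 2^{n/r}$, so with high probability any two tuples with different odd-multiplicity supports differ on the first eliminated block. Every bucket then holds copies of a single equation, every combined equation is trivial, the final $\Delta$ is $\tfrac12$, and you output $\cV=1$. The birthday-type bound you hope for is therefore false, because bucketing by equal left-hand sides is exactly what pairs identical tuples. No second-moment estimate can repair this regime.

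The small $k$ came from underestimating what the synthetic refutation certifies. On $N$ equations with $r$ rounds it certifies bias roughly $N^{-1/2^{r+1}}$, not $2^{-\Theta(\log n)}$. For the paper's $r=\frac12\log\log n$ (so $2^r=\sqrt{\log n}$; $r=\log\log n$ would be too many rounds) this is $2^{-n/(\sqrt{\log n}\,(\log\log n)^{O(1)})}$. That slack is what permits the paper's $q=\frac{6n}{\gamma\log n\log\log n}$. This $q$ makes $\binom{m}{q}=2^{\Theta(n/\log\log n)}$, enough distinct synthetic equations, while $\eps^q\geq 2^{-8n/(\gamma\log n)}$ stays far above the certifiable threshold.

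Once $k$ is this large, two ingredients missing from your plan become necessary.

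First, soundness. The bound $\val(A,b)\leq\cV$ must hold deterministically. Sampling gives it only with high probability over your coins. Your fallback of enumerating ordered tuples reintroduces duplication, since each set recurs $k!=2^{\Theta(n/\log\log n)}$ times. The paper therefore enumerates all $q$-subsets. It replaces your identity $\E_T\prod_j\chi_{i_j}(x)=\delta(x)^k$ with \cref{lem:fullinduced}, which shows the $q$-subset average is at least $2\eps^q$ when $q\leq\eps^2 m$.

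Second, even covers. The family $\binom{[m]}{q}$ contains many even covers of length $3$ or $4$, so it cannot be refuted as a single system. The paper randomly partitions it into parts of size $M=2^{3n/\log\log n}$. On each part it runs a subroutine that tracks which original and which synthetic equations each combination uses, and aborts on trivial even covers. It then averages the per-part bounds; since the parts have equal size, the average still bounds the bias of the union. A single random sample of $N$ subsets would behave like one part but would give soundness only with high probability; partition-and-average gets both soundness and completeness.

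Completeness then comes from \cref{lem:noevencovers}: all but a $2^{-8n/(\gamma\log n)}$ fraction of parts have no even cover of length at most $\log n$. This combines with the termination and Chebyshev bounds of \cref{lem:subroutineproperties}. In your proposal this analysis is left as an unexecuted plan.
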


Finally, to demonstrate that our algorithm is not captured by the sum-of-squares hierarchy, we prove an $\Omega(n)$-degree lower bound for sufficiently dense linear systems in the sum-of-squares hierarchy. 
\begin{theorem}[Informal \cref{thm:soslowerbound}]
\label{thm:infsoslowerbound}
    Let $(A,b)$ be a linear system with $m=2^{\Theta(n/\log n)}$ equations where (1) each row $a_i$ is sampled by independently setting each coordinate $(a_{i})_j=1$ with probability $q=\Omega(\frac{1}{\log n})$, and (2) $b \gets \F_2^m$ uniformly at random. Then, with high probability over the draw of $(A,b)$, the following hold:
        \begin{itemize}
        \item $\val(A,b)\leq \frac{1}{2} + o(1)$;
        \item The canonical degree-$\Omega(n)$ sum-of-squares relaxation has value $1$, i.e., it fails to refute $(A,b)$.
    \end{itemize}
    \end{theorem}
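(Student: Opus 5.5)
The plan has two parts, one for each bullet of \cref{thm:infsoslowerbound}. Throughout, write $m = 2^{cn/\log n}$ and $q = C/\log n$.

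\textbf{Value bound.} For the first bullet, fix $x \in \F_2^n$. Since $b$ is uniform and independent of $A$, the indicators $[\ip{a_i,x} = b_i]$ are i.i.d.\ Bernoulli$(1/2)$. Hoeffding's inequality then gives
\[
\Pr\bigl[\text{more than } (1/2+\delta)m \text{ equations are satisfied by } x\bigr] \leq e^{-2\delta^2 m}.
\]
A union bound over all $2^n$ assignments gives $\val(A,b) \leq 1/2 + \delta$ whenever $\delta^2 m \gg n$. Since $m = 2^{\Theta(n/\log n)}$ is far larger than $n$, we can take $\delta = o(1)$.

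\textbf{Sum-of-squares lower bound.} For the second bullet, I will use the standard Grigoriev/Schoenebeck framework for XOR constraints. Consider the $\F_2$-linear span of the rows $a_i$, where each $a_i$ carries its right-hand side $b_i$. Define the degree-$d$ pseudoexpectation by Gaussian elimination restricted to width $2d$. Set $\pE[\chi_S] = (-1)^{\sum_{i\in T} b_i}$ if $S = \sum_{i\in T} a_i$ can be derived from the equations by a width-$2d$ derivation, and set $\pE[\chi_S] = 0$ otherwise. Schoenebeck's theorem says this operator is a valid degree-$d$ pseudoexpectation that satisfies every equation, so the relaxation has value $1$, provided no width-$2d$ derivation produces $0 = 1$. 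The standard sufficient condition is a boundary-expansion property of $A$: every set $T$ of rows with $|T| \leq s$ must have $\bigl|\sum_{i\in T} a_i\bigr| \geq \Omega(n)$ whenever the sum is nonzero, in a form strong enough to run the width argument. It then suffices to prove this expansion with $s$ large enough to give $d = \Omega(n)$.

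\textbf{Expansion and the main obstacle.} This step is where the choice $q = \Omega(1/\log n)$ enters, and I expect it to be the main obstacle. For a fixed set $T$ with $|T| = t$, the coordinates of $\sum_{i\in T} a_i$ are independent, and each equals $1$ with probability
\[
p_t = \tfrac12\bigl(1 - (1-2q)^t\bigr).
\]
So for $t \gtrsim 1/q = \Theta(\log n)$ the sum behaves like a nearly uniform vector. Its weight is then below $\eta n$ with probability about $2^{-(1-H(\eta))n}$. A union bound over all $\binom{m}{t} \leq 2^{t\, cn/\log n}$ choices of $T$ works only for $t \lesssim \log n / c$. Small sets need separate treatment: each row has weight about $qn = \Theta(n/\log n)$, and small sums have weight about $tqn$ with Chernoff concentration.

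The difficulty is that linear expansion fails for sets of size $\omega(\log n)$, since $m$ rows in dimension $n$ have many short dependencies. So the argument cannot rest on a global unique-neighbor expansion. Instead I would take the Grigoriev-style route for dense systems: show that every derivation of width at most $\eta n$ that ends in a contradiction must combine at least $\Omega(\log n)$ rows, and that any such combination already has weight above $\eta n$. Each step of the width-bounded closure then stays in a regime the union bound controls. Concretely, I would argue that any $T$ with $\bigl|\sum_{i\in T} a_i\bigr| \leq \eta n$ must have $t \geq c'\log n$, and that the "closure" of a small set of variables picks up at most $O(\log n)$ rows. Balancing the constants $c$, $C$, and $\eta$ so that these two regimes overlap is the delicate calculation. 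This consistency then yields the degree-$\Omega(n)$ pseudoexpectation.
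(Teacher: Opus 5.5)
\textbf{Verdict: there is a genuine gap in the sum-of-squares part.} Your value bound is correct (Hoeffding plus a union bound over $2^n$ assignments); the paper does not prove that bullet. Reducing the second bullet to Schoenebeck's criterion (\cref{fact:sch}) is also what the paper does: no width-$w$ resolution refutation gives a degree-$w/2$ pseudoexpectation, provided every row has degree at most $w/2$. The gap is in the combinatorial core, where the properties you propose to prove are either false or not enough.
\begin{itemize}
\item Your ``standard sufficient condition'', $|\sum_{i\in T}a_i|\geq\Omega(n)$ for all small $T$ with nonzero sum, already fails at $|T|=1$, since each row has weight about $qn=\Theta(n/\log n)=o(n)$.
\item Your concrete claim that every $T$ with $|\sum_{i\in T}a_i|\leq\eta n$ has $|T|\geq c'\log n$ is false for the same reason. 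Any $T$ with $|T|\leq \eta/(2q)$ has weight at most $|T|\cdot 2qn\leq\eta n$.
\item Even granting that a contradiction must combine $\Omega(\log n)$ rows, nothing in your plan turns this into ``some line of the derivation has width $\Omega(n)$''. The step ``the closure picks up at most $O(\log n)$ rows'' is not specified. You also do not explain why a derivation whose lines each have width at most $\eta n$, but which together may touch all $n$ variables, would stay inside such a closure.
\end{itemize}

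The missing idea is an intermediate-size argument (Ben-Sasson--Wigderson style) combined with a \emph{window} property, not a global one.
\begin{itemize}
\item Attach to each derived line the index set of original rows it comes from, namely the symmetric difference of its parents' sets. The multilinearized monomial on that line has support $\sum_{i\in I}a_i$, index-set sizes at most double per step, and axioms have size $1$.
\item The final line $1=-1$ has a nonempty index set $I'$ with $\sum_{i\in I'}a_i=0$. If no nonempty set of at most $t_1=2/q$ rows sums to zero, then $|I'|>t_1$.
\item Walk back from $I'$ along the larger parent. This reaches a line whose index set has size in $[t_2/2,t_2]$, where $t_2=1/q$.
\item It therefore suffices to show that for every $T$ with $|T|\in[1/(2q),1/q]$, the sum $\sum_{i\in T}a_i$ has weight $\Omega(n)$. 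Each coordinate of such a sum is $1$ with probability at least $\frac12(1-e^{-1})$.
\item The union bound is over at most $m^{1/q}$ sets. This is $2^{O(n/\lambda)}$ once $q$ is tied to the exponent of $m$: the paper takes $m=2^{Cn/\log n}$ and $q=\lambda\cdot 8C\ln 2/\log n$ with $\lambda>1$.
\end{itemize}
This is the ``balancing of constants'' you anticipated, but it only has to hold on this window, not for all sets of size $\Omega(\log n)$. The conclusion also needs row weights at most $2qn=o(n)\leq w/2$. A Grigoriev-style direct construction would need exactly the same window property, to make the equivalence relation on low-degree monomials transitive. So the missing ingredient is the same in either framework.
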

We note that the distribution over $(A,b)$ in \cref{thm:infsoslowerbound} is \emph{not} the standard LPN distribution. This is because there is a technical issue with encoding a system of dense linear equations as a sum-of-squares optimization problem, as the latter requires encoding each linear equation over $\F_2$ into a polynomial equation over $\R$. The standard method for doing this converts a linear equation $\ip{a_i,x}=b_i$ into the polynomial equation $\prod_{j \in \supp(a_i)}y_j = (-1)^{b_i}$, where $y_j \defeq (-1)^{x_j}$ and $\supp(a_i) \defeq \{j : (a_i)_j = 1\}$. This requires polynomials of degree $\approx n/2$ when the $a_i$'s are random. Thus, the degree of the sum-of-squares hierarchy must be at least $n/2$ for the sum-of-squares relaxation to even be \emph{defined}. In \cref{thm:infsoslowerbound}, we circumvent this issue by proving a lower bound for the specific case where each equation $a_i$ is sampled randomly with support of size $n/\log n$ in expectation. While this does differ from the standard LPN distribution,\footnote{The typical intuition is that sparser equations should be ``easier'', so one might expect this distribution to in fact be easier than the standard LPN distribution, thus making the lower bound stronger. However, formally this is not known.} 
where $A$ is drawn uniformly at random, we note that the algorithm in \cref{thm:semirandomfull} succeeds for semirandom systems, i.e., those with arbitrary $A$, so in particular it also succeeds on this input distribution. Hence, this is sufficient to assert that \cref{thm:semirandomfull} is not captured by the sum-of-squares hierarchy.

\section{Proof Overview}
\label{sec:overview}
In this section, we give an overview of the proofs of \cref{thm:semirandomfull,thm:semirandomgeneral,thm:infsoslowerbound}.

\subsection{The search algorithm of \cite{BKW03}}
To begin, we will first give a summary of the search algorithm of~\cite{BKW03} for random instances of LPN. Let us explain the main idea of the BKW algorithm first. Let's first start with a satisfiable system of linear equations $\{ \langle a_i, x \rangle = b_i\}$. Say we find a set of $a_i$s that add up to $e_j$. Then, adding the corresponding equations gives us the value of $x_j$ -- the $j$th bit of the hidden solution. Now, when each of the equations are erroneous with probability $1/2-\epsilon$, the linear combination has correlation $(2\epsilon)^t$ (instead of $1$) with $x_j$. In particular, the amount of correlation drops exponentially in the \emph{sparsity} of the linear combination. It is thus beneficial to find a linear combination that adds up to $e_j$ while being as sparse as possible. 

The key subroutine in the algorithm shows precisely such a result. Given input $m=2^{\Omega(n/\log n)}$ random vectors $a_1, \cdots, a_m$, it finds a set $S$ of size $\ll \frac{n}{\log n}$ (thus, significantly smaller than the trivial bound of $n$) such that $\sum_{i\in S}a_i=e_j$, where $e_j$ is any standard basis vector. Since
$\ip{a_i,x^*}=b_i$ independently with probability $\frac12+\eps$ for all $i$, it follows that $\sum_{i \in S} b_i$ is equal to $x^*_1$ with probability $\frac{1}{2}+\frac12(2\eps)^{t}$, where $t = \abs{S}$. Therefore, if we can find $\gg 2^{\Omega(t)}$ total such disjoint sets $S$, each of size $\leq t$, then by a Chernoff Bound a majority vote recovers $x^*_1$. 

We can construct such a set $S$ by repeatedly combining the $a_i$'s to ``zero out'' blocks of coordinates. More precisely, their algorithm proceeds as follows. First, divide the coordinates $1, \cdots, n$ into $r=\frac{1}{2}\log n$ blocks of size $s=\frac{2n}{\log n}$. 
Given a subset $\{a_1, \cdots, a_M\}$ of $M=2^{\Omega(n/\log n)}$  vectors, where $M\leq m$, start at the rightmost block consisting of the last $s$ coordinates, and bucket the vectors $a_i$ based on their last $s$ coordinates. Then, pick a random vector in each bucket which contains at least $2$ vectors, and add it to all other vectors in the bucket, which zeroes out the last $s$ coordinates of all those vectors. We are left with a set of at least $m-2^s$ new vectors whose last $s$ coordinates are $0$. Note that since all coordinates of the random vectors are independent, the vectors produced from this process are still uniformly random in their first $n-s$ coordinates. We can then repeat the above on the second block from the right, and so on, until only the last block is nonzero. The last block will contain uniformly random vectors in $\F_2^s$, so for sufficiently large $m=2^{\Theta(n/\log n)}$, it will contain the vector $e_1$ with high probability. We have thus written $e_1$ as the sum of at most $2^{r-1} \leq \frac{\sqrt{n}}{2}$ different $a_1, \cdots, a_m$, so the sum of the corresponding RHS of these equations is equal to $x^*_1$ with probability at least $\frac{1}{2}+\frac12(2\eps)^{\sqrt{n}/2}$.

Thus, we can recover $x^*_1$ with high probability by dividing the original set of $m$ vectors into $2^{O(n^{1/2})}$ sets of $M = 2^{O(n/\log n)}$ vectors, and using each set of $M$ vectors to find such a set $S$ for $e_1$ and then taking the majority vote. Repeating the strategy for each coordinate (noting that we can either zero out blocks out of order or permute the coordinates of the vectors) recovers $x_2^*, \cdots, x^*_n$, and thus all of $x^*$. This requires a total of only $n \cdot 2^{O(\sqrt{n})} \cdot M=2^{\Theta(n/\log n)}$ samples, so the above algorithm (which is $\poly(m)$ time) runs in $2^{\Theta(n/\log n)}$ time. 

\subsection{Refutation via BKW-style bucketing: challenges and our approach}
Can we extend such an idea to the refutation problem? There's a natural variant of the above procedure that one may try: generate sets $S$ of $|S|=2^r=\sqrt{n}$ vectors that sum to the zero vector. Let $\cS$ be the set of these $S$'s. Then, if there exists some $x^*$ such that $\ip{a_i,x^*}=b_i$ for a $\geq\frac12+\eps$ fraction of $i$, then we would expect that for $\geq\frac12+\frac12(2\eps)^{\sqrt{n}}$ fraction of $S \in \cS$, $\sum_{i\in S}b_i=\ip{\sum_{i\in S}a_i, x^*}=0$. Hence, if no such $x^*$ exists, we should expect that $\sum_{i\in S}b_i = 0$ for $\leq \frac{1}{2} + O(\eps^{\sqrt{n}})$ fraction of $S \in \cS$. 

Unfortunately, such a procedure does not constitute a \emph{certificate} that there is no good solution for the original system of equations. Indeed, it is not hard to show that there exist linear systems $(A,b)$ that satisfy the above condition but also have $\val(A,b) \geq \frac{1}{2} + \eps$.\footnote{Let $a_i=e_1$ for all $i\in [m]$, $b_1, \cdots, b_{m-1}=0$, and $b_{m}=1$. Pick $S_i=\{m\}\cup P_i$ for all $i\in [M/2]$, where the $P_1, \cdots, P_{M/2}$ are distinct elements of ${[m-1]\choose \sqrt{n}-1}$ in some order, and distinct $S_{M/2+1}, \cdots S_M\in {[m-1]\choose \sqrt{n}}$. Then we have $\sum_{j\in S_i}a_j=\mathbf0$ for all $i$, and $\sum_{j\in S_i}b_j=1$ for all $i\leq M/2$ and $0$ for all $i>M/2$.
} At a high level, the issue is that the algorithm takes the initial system $(A,b)$ and produces a new linear system $(A',b')$ where there is no obvious relation between $\val(A,b)$ and $\val(A',b')$. So, even if we can argue that $\val(A',b')$ is small, it is not clear how to conclude that $\val(A,b)$ must also be small.

As a first step to obtaining a refutation algorithm, we make the observation that if \emph{all of the buckets are of equal size}, and we \emph{combine all pairs of equations in each bucket}, then we will obtain a linear system $(A',b')$ with $m'= m^2/2^s$ equations,
each the sum of two original equations, such that if $\bias(A,b)\geq \eps$, then $\bias(A',b')\geq\Omega(\eps^2)$. That is, if $\val(A,b) \geq \frac{1}{2} + \eps$, then $\val(A',b') \geq \frac{1}{2} + \Omega(\eps^2)$. This follows from the Cauchy--Schwarz inequality: letting $S_v$ be the set of all $i$ such that the last block of $a_i$ is equal to $v\in \F_2^{s}$, and viewing a linear equation $\ip{a_i,x}=b_i$ as the function $p_i(x)=(-1)^{\ip{a_i, x} - b_i}$, we see that 
\begin{flalign*}
    \bias(A,b) &=\max_{x \in \F_2^n} \frac{1}{m}\sum_{v \in \F_2^{s}}\left(\sum_{i\in S_v}p_i(x)\right) \\
    &\leq \max_{x \in \F_2^n} \sqrt{ 2^s \cdot \left( \frac{1}{m^2}\sum_{v \in \F_2^s}\left(\sum_{i\in S_v}p_{i}(x)\right)^2\right)} \mper
\end{flalign*}
Note that we have
\begin{flalign*}
\bias(A',b') &= \max_{x \in \F_2^n} \frac{1}{\sum_{v \in \F_2^s} \abs{S_v}^2} \sum_{v \in \F_2^s} \left(\sum_{i\in S_v}p_{i}(x)\right)^2 \mcom
\end{flalign*}
and if the $S_v$'s are balanced, then $\abs{S_v} = m/2^s$, which implies that $\sum_{v \in \F_2^s} \abs{S_v}^2 = m^2/2^s$, and hence $\sqrt{\bias(A',b')} \geq \bias(A,b)$.

Therefore, if we again continue this process for all $r$ blocks, we will eventually produce an instance $(A'',b'')$ where (1) $\bias(A'',b'')^{1/2^r} \geq \bias(A,b)$, and (2) all equations in $A''$ have $a''_i = 0^n$, so that $\bias(A'',b'')$ is simply the absolute value of the fraction of $1$'s minus the fraction of $0$'s in $b''$, and this is indeed a certificate of refutation for the original instance $(A,b)$.

However, there is a problem: at each step of this process, the number of equations roughly grows from $m$ to $m^2/2^s$, which means that after $r$ steps, we have $m^{2^r} / 2^{sr}$ equations. Recall that $s = \frac{2 n}{ \log n}$, $r s = n$, and $m = 2^{O(s)}$, so that $m^{2^r} / 2^{sr} = 2^{O(n^{3/2} / \log n)}$, which is larger than $2^{O(n)}$. Hence, this algorithm as stated is worse than the naive exponential-time algorithm that tries all assignments $x \in \F_2^n$.

We can control the growth of the number of equations via the following simple idea: rather than working with the original buckets at each step, we instead split the buckets into sub-buckets of size $2^{O(n/\log^2n)}$, and apply Cauchy--Schwarz only within a sub-bucket. That is, we only ``combine all pairs'' within each sub-bucket. This causes the number of vectors $m$ to grow by only a factor of $2^{O(n/\log^2n)}$ at each step, so final number of vectors is $m \cdot 2^{r \cdot O(n/\log^2 n)}$, which is $2^{O(n/\log n)}$, and thus our runtime is still $2^{O(n/\log n)}$. We note that to prove our algorithm outputs with high probability a nontrivial bound $\cV$ on the value, we must show there are not many ``trivial even covers'', i.e. $2^r$ sparse combinations of vectors, such that every vector appears an even number of times in the combination. We present our full algorithm and analysis in \Cref{sec:semirandom}.

\parhead{Extension to the~\cite{Lyub05} regime.} The algorithm of~\cite{Lyub05} works by starting with $m=n^{1+\gamma}$ equations and generating synthetic samples by taking random sparse combinations of the original equations. It samples $M=2^{O(n/\log \log n)}$ random $q=\Theta(\frac{n}{\log n})$-sparse combinations of vectors, where $q$ is chosen to be sufficiently large to ensure the random sampling produces equations which are approximately independent and random. In particular, their analysis requires $q$ to be chosen such that there are ${m\choose q}\geq 2^{2n}$ possible synthetic equations to sample from. Also, it requires the original $m$ vectors to be uniformly random, unlike \cite{BKW03} which works for the semirandom case.

We extend our algorithm to this setting by producing a deterministic sample generation process to generate sufficiently many synthetic equations to run the refutation algorithm. Our sample generation process differs from that of \cite{Lyub05} in a few crucial ways: to relate the value of the synthetic system to the original system, we must ensure that we generate samples in a way such that if original system has value $\frac{1}{2}+\eps$, then the resulting synthetic system has value $\frac{1}{2}+\Omega(\eps^q)$. At the same time, we must ensure that the set of $q$-sparse combinations we choose must not contain many length $2\sqrt{\log n}$ even covers to ensure we output a nontrivial bound with high probability. This is because we will combine up to $\sqrt{\log n}$ synthetic equations together when we run the refutation algorithm, and if two of these combine to form an even cover, then the RHS of the equations will be dependent, weakening our concentration bounds. We note that given an unbalanced $(\sqrt{\log n},\frac12-\Omega(1))$-lossless expander with left-degree $O(\frac{n}{\log n})=O(\frac{\log M\log\log\log M}{\log\log M})$, one can simply choose the left-neighborhood sets as our sparse combinations; however, the best known explicit constructions \cite{GuruswamiUV07} only achieve left-degree $\Omega(\log M\log\log\log M)$.\footnote{See Theorem 3.5 of \cite{GuruswamiUV07}, with $N=2^{O(n/\log\log n)}$, $K_{\max}=\sqrt{\log n}$, and $\eps < \frac12$.} 

Instead, our strategy is to pick a suitable $q$ and take \emph{all} $q$-sparse combinations, which directly ensures completeness as shown in the previous section. This requires us to pick an asymptotically smaller $q=O(\frac{n}{\log n\log\log n})$ to ensure we have at most $2^{O(n/\log \log n)}$ total equations. This contains many small (length $3$ or $4$) even covers, so to output a nontrivial upper bound on the value with high probability, we randomly partition the synthetic equations into parts of size $2^{Cn/\log\log n}$ for carefully chosen $C$. This guarantees that a $1-o(1)$ fraction of the parts will have no trivial even covers, and so we can use our aforementioned refutation algorithm to refute each part separately, obtaining an $\frac12+O(\eps)$ bound on all but an $O(\eps)$-fraction of parts, which is sufficient.
\subsection{Sum of squares lower bounds for linear systems}
To show that sum-of-squares hierarchy does not recognize our algorithm, we show that with high probability over a $O(\frac{n}{\log n})$-sparse linear system, there is a degree-$\Omega(n)$ pseudoexpectation which ``thinks'' the system is satisfiable. We use a result of \cite{Sch08} showing that such a pseudoexpectation exists if there is no width-$w=\Omega(n)$ resolution proof of unsatisfiability. That is, viewing the system as monomial equations over $\{\pm 1\}$, we cannot derive $0=1$ by multiplying pairs of equations together to create new equations without generating a degree $w$ monomial. We then show that with high probability, a random linear system contains no short even cover, so any resolution refutation must contain a large number of equations, and moreover there exists a interval $I=[t,2t]$ such that any combination of $s\in I$ total $a_j$'s has Hamming weight $\Omega(n)$. Hence, it is not possible to generate an even cover through resolution without generating a degree-$\Omega(n)$ monomial. We present the proof in \Cref{sec:sos}.
\section{Refuting Semirandom Linear Systems}
\label{sec:semirandom}
In this section, we prove our main algorithmic results, \cref{thm:semirandomfull,thm:semirandomgeneral}, which we restate below.
\semirandomfull*
\semirandomgeneral*

We prove these results by introducing two closely related algorithms, both of which rely on a subroutine which takes in a semirandom linear system derived from a system $(A,b)$ and outputs an upper bound on its value. We begin by formally defining the notion of a system induced by a set of index sets.
\begin{definition}
    Let $A\in \F_2^{m\times n}$, $b\in \F_2^m$, and $S\subseteq 2^{[m]}$ be a multiset of subsets of $[m]$. Then we call the system of equations $\{\ip{\sum_{t\in T}a_t,x}=\sum_{t\in T}b_t\}_{T\in S}$ the $S$-induced linear system of $(A,b)$. We define $\val(A,b,S)$ to be the value of the $S$-induced linear system of $(A,b)$, i.e. the maximum over all $x\in \F_2^n$ of the fraction of equations satisfied by $x$. We define $\bias(A,b,S)\coloneqq \val(A,b,S)-\frac12$.
\end{definition}
\newcommand{\cT}{\mathcal{T}}
The subroutine, which we will present in \Cref{sec:semirandomlemmas}, takes in a semirandom linear system $(A,b)$ with $m$ total equations, parameters $M,r$, and an $M$-tuple $\cT$ of index sets, and operates on the $\cT$-induced linear system of $(A,b)$. The parameter $r$ describes the number of bucketing steps our algorithm will take where we combine pairs of equations. When $m\geq 2^{O(n/\log n)}$, the inputs $M,\cT$ are redundant, and we simply run the subroutine on the original linear system $(A,b)$ (here, we have $M=m$ and $\cT=(\{1\}, \cdots, \{m\})$); however, we will use induced systems in the regime where $m=n^{1+\gamma}\ll 2^{O(n/\log n)}$.
\begin{lemma}\label{lem:subroutineproperties}
    There exists an algorithm REFUTE taking in inputs $M, A\in \F_2^{m\times n}$, $b\in \F_2^m$, $r\geq 4$, $\cT\in (2^{[m]})^M$ and outputs some $\cB\in [0,\frac12]$ such that: 
    \begin{enumerate}
        \item The runtime of the algorithm is $r\poly(M2^{n/r})$ time. \label{item:runtime}
        \item $\val(A,b,\cT)\leq \frac12+\cB$. \label{item:valbound}
        \item If $M\geq 2^{4n/3r+2n/r^2+5r}$ and $\cT$ does not contain even covers of length at most $2^{r+1}$, then the algorithm terminates early and outputs $\frac12$ with probability at most $25\cdot 2^{-n/6r}$ over the randomness introduced by the algorithm.\label{item:termination}
        \item If $(A,b)$ is a semirandom linear system and conditioned on the algorithm not terminating prematurely, whenever $M\geq r2^{n/r+n/r^2}$, then for any $\eps>2^{-n/r^22^{r+1}}$, we have $\cB<\eps$ with probability at least $1-1/(3M\eps^{2^{r+1}}2^{n/6r})$.
        \label{item:epsbound}
    \end{enumerate}
\end{lemma}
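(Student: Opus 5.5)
I will first specify REFUTE, then verify the four items in order. Split the coordinates $[n]$ into $r$ blocks of size $n/r$. The algorithm keeps a current multiset $\cT_j$ of index sets, starting from $\cT_0=\cT$. At round $j=1,\dots,r$ it does the following.

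\begin{enumerate}
\item Bucket the current combined vectors $\sum_{t\in T}a_t$, for $T\in\cT_{j-1}$, by their $j$-th block. After $j$ rounds, all blocks $1,\dots,j$ are therefore zero.
\item Randomly partition each bucket into sub-buckets of a common size $k=2^{n/r^2}$, discarding leftovers. If some bucket is too small, or too much mass is discarded, the algorithm terminates early and outputs $\frac12$; this is trivially valid since $\val\le 1$.
\item Within each sub-bucket, form all ordered pairs $(T,T')$ and add the combined index multiset $T\triangle T'$ (as a sum) to $\cT_j$.
\end{enumerate}

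After $r$ rounds every combined vector is $0^n$. The output is $\cB=\big|\frac{1}{|\cT_r|}\sum_{T\in\cT_r}(-1)^{\sum_{t\in T}b_t}\big|^{1/2^r}$, suitably rescaled to account for the discarded equations and sub-bucket normalisation.

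The runtime (item~\ref{item:runtime}) is immediate. Each round multiplies the number of equations by at most $k$, so sizes stay $\le M\cdot 2^{n/r}$. Bucketing and pairing cost polynomial time per round, for $r$ rounds in total.

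Soundness (item~\ref{item:valbound}) is the Cauchy--Schwarz step from the overview, applied sub-bucket by sub-bucket. Write $p_T(x)=(-1)^{\ip{\sum a_t,x}-\sum b_t}$. Because all sub-buckets have equal size $k$, the average over $\cT_{j-1}$ of $p_T(x)$, after dropping discarded equations, is at most the square root of the average of $\big(\frac1k\sum_{T\in\text{sub-bucket}}p_T(x)\big)^2$. That average equals the average over $\cT_j$ of $p_{T\triangle T'}(x)$. Iterating over the $r$ rounds gives $(2\bias(A,b,\cT))^{2^r}\lesssim$ the final average. The final average does not depend on $x$ because all vectors are zero, so it is exactly the certified quantity. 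I must track the discarded fraction, since it contributes an additive error before squaring; I would charge it to the termination condition, i.e.\ terminate if more than a small fraction is discarded.

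For item~\ref{item:termination}, early termination happens only if bucket sizes are badly unbalanced. Since $\cT$ has no short even covers, distinct combinations stay distinct. I would bound the number of buckets with fewer than $k$ elements by $2^{n/r}$ per round, so discarded mass is at most $r\,k\,2^{n/r}/M$, which is tiny under the hypothesis $M\ge 2^{4n/3r+2n/r^2+5r}$. Via a Markov/second-moment bound on the random partition, this gives failure probability $25\cdot 2^{-n/6r}$. I expect I will have to choose the exact termination rule to make the constants match.

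Item~\ref{item:epsbound} is the main obstacle. Each final equation is a combination of $2^r$ original index sets from $\cT$, i.e.\ a formal sum of at most $2^r\max|T|$ original rows; its right-hand side is $\sum b_t$ over the odd-multiplicity indices. If that odd set is nonempty, the sign is a uniform $\pm1$. I would bound $\E\big[(\sum_{T\in\cT_r}(-1)^{b_T})^2\big]$: pairs whose symmetric difference is nonempty contribute zero. So the second moment is the number of pairs forming a \emph{trivial even cover} of length $\le 2^{r+1}$, plus terms whose odd set is empty. The key counting claim is that, because pairs are taken only within sub-buckets chosen randomly, the number of such coincidences is at most $|\cT_r|\cdot 2^{O(r)}$ in expectation. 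I would prove this by induction on the rounds, arguing that a coincidence at round $j$ must already be a coincidence in one of the two halves at round $j-1$ or be the trivial pairing. Granting this, Chebyshev gives $\Pr[\text{final avg}>(2\eps)^{2^r}]\le 2^{O(r)}/(|\cT_r|\eps^{2^{r+1}})$. Substituting $|\cT_r|\ge M 2^{n/6r}$ yields the stated $1-1/(3M\eps^{2^{r+1}}2^{n/6r})$. The lower bound $\eps>2^{-n/r^22^{r+1}}$ ensures that the trivial diagonal pairs $T=T'$, which occur in a fraction $1/k$ per round, do not dominate: I would exclude them by pairing only distinct elements.
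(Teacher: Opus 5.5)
\textbf{There is a genuine gap in item~4, and it forces a different algorithm design.} Your treatment of the runtime and of the Cauchy--Schwarz step within equal-size sub-buckets matches the paper. But item~4, as you plan to prove it, is false for your algorithm.

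Item~4 places no hypothesis on $\cT$; it only conditions on non-termination. Your algorithm terminates only on bucket-size/discard conditions, which never look at which $\cT$-indices or rows a combined equation uses. Take $\cT$ to be $M$ copies of one set $T$:
\begin{itemize}
\item everything lands in one bucket, and nothing terminates;
\item every round-1 combination is $T\oplus T=\emptyset$, with right-hand side $0$;
\item so the output is $\cB=\frac12$ with probability $1$.
\end{itemize}
These are exactly the ``terms whose odd set is empty'' that your second-moment bound leaves uncontrolled. No in-expectation bound over the random partition can give a guarantee conditioned on non-termination for arbitrary $\cT$. This matters in \Cref{thm:semirandomgeneral}, where item~4 is applied to parts that may contain short even covers but did not terminate.

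Separately, the induction you sketch for the coincidence count is wrong: coincidences are not inherited. Suppose $i,j,k,l$ lie in one round-1 sub-bucket. Then $\{i,j\},\{k,l\}$ share a round-2 bucket if and only if $\{i,k\},\{j,l\}$ do. Hence $\{i,j\}\oplus\{k,l\}=\{i,k\}\oplus\{j,l\}$ is a new round-2 coincidence built from four distinct round-1 elements.

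\textbf{The missing idea is to make REFUTE self-certifying.} For each combined equation, the paper tracks $g_k$ (original rows used an odd number of times) and $h_k$ ($\cT$-indices used an odd number of times). It then:
\begin{itemize}
\item terminates if $g_{k+1}=\emptyset\neq h_{k+1}$;
\item prunes elements with $h_{k+1}=\emptyset$, terminating if more than a $2^{-n/6r}$ fraction is pruned;
\item terminates if more than a $\frac1M2^{-n/6r}$ fraction of final pairs share $h_r$;
\item terminates if some final pair shares $g_r$ but not $h_r$.
\end{itemize}
None of these checks depends on $b$. Given non-termination, every $f_r(T)$ is a uniform bit, and the only dependent pairs are those with equal $h_r$. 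Their number is bounded deterministically, so item~4 is a direct Chebyshev bound over $b$.

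The probabilistic counting belongs in item~3 instead. With no short even covers in $\cT$, the $g$-checks never fire. The pruning and final-count checks fail with small probability by \Cref{lem:gkibound} (each $\cT$-index lies in at most $2^{kn/r^2}$ elements of $S_k$), plus Markov over the random sub-bucketing.

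Your item~3 argument only concerns bucketing leftovers. These are deterministic (at most $2^{n/r+n/r^2}$ per round), so there is nothing random for a Markov bound to act on. Also, ``terminate if some bucket is too small'' would fire on essentially every adversarial $A$; leftovers should simply be deleted.

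\textbf{Item~2 also needs a fix.} You cannot charge the discarded fraction to termination, because the algorithm does not know the bias. The paper floors the output at $2^{-n/r^22^{r+1}}$. Then \Cref{lem:eachiteration} is only needed when $\bias(A,b,S_k)\ge 2^{-n/4r^2}$, and this bias dominates the discarded fractions.
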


We begin by proving \Cref{thm:semirandomfull}.
\begin{proof}[Proof of \Cref{thm:semirandomfull}]
    We present the following algorithm, which defines a value of $r$ and runs the REFUTE subroutine.
    \begin{mdframed}
    \begin{algorithm}[Refutation]
    \label{alg:refutation1}    
    \mbox{}
    \begin{description}
        \item[Input:] $a_1,\cdots,a_m\in\F_2^{ n}$, $b_1,\cdots, b_m\in \F_2$
        \item[Output:] $\cV\in [\frac12,1]$ such that at most $\cV$ fraction of equations $\ip{a_i,x}=b_i$ are satisfiable.
        \item[Operation:] \mbox{}
            \begin{enumerate}
                \item Define $r\coloneqq \frac{\log n}{2}$ and $s\coloneqq n/r$. Let $M\coloneqq m$ and $q\coloneqq 1$.
            
            Define the stacked matrix $A\in \F_2^{m\times n}$ of the $a_i$s, the vector $b\in \F_2^m$ of the $b_i$s, and $\cT=(\{1\},\{2\},\cdots, \{m\}))$.
            
            \item Output $\frac12+$REFUTE$(m, A, b, r, q,\cT)$.
            \end{enumerate}
    \end{description}
    \end{algorithm}
\end{mdframed}
    This algorithm calls the REFUTE subroutine with $M=m$ and $r=O(\log n)$, so we apply \Cref{lem:subroutineproperties}. By \Cref{item:runtime}, this runs in $\poly(M)=\poly(m)$ time, and the first condition of the theorem follows from \Cref{item:valbound}. By \Cref{item:termination}, since $\cT=(\{1\}, \{2\}, \cdots, \{m\})$ contains no even covers, the subroutine does not terminate early with high probability, so the second condition of the theorem then follows from \Cref{item:epsbound}.
\end{proof}
We then prove \Cref{thm:semirandomgeneral}. We first state the following result relating the value of a linear system $(A,b)$ to the value of its ${[m]\choose q}$-induced linear system.
\begin{restatable}{lemma}{fullinduced}\label{lem:fullinduced}
    Let $(A,b)$ be a linear system in $\F_2^n$ with $m$ equations. If $q\leq \eps^2 m$ and $\val(A,b,{[m]\choose q})\leq\frac12+\eps^q$, then $\val(A,b)\leq\frac12+\eps$.
\end{restatable}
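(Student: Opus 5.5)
The plan is to prove the contrapositive: if $\val(A,b)>\frac12+\eps$, then $\val(A,b,{[m]\choose q})>\frac12+\eps^q$. Fix an assignment $x$ attaining $\val(A,b)$ and set $p_i\defeq(-1)^{\ip{a_i,x}+b_i}\in\{\pm1\}$. Then $\mu\defeq\frac1m\sum_i p_i=2\bias(A,b)>2\eps$. The induced equation for $T\in{[m]\choose q}$ is satisfied by the same $x$ exactly when $\prod_{i\in T}p_i=1$. Hence
\[
\bias\Bigl(A,b,{[m]\choose q}\Bigr)\;\geq\;\frac12\,\E_{T}\Bigl[\prod_{i\in T}p_i\Bigr]\;=\;\frac12\cdot\frac{e_q(p_1,\dots,p_m)}{{m\choose q}},
\]
where $T$ is a uniformly random $q$-subset and $e_q$ is the elementary symmetric polynomial. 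So it suffices to show $\E_T[\prod_{i\in T}p_i]\geq 2\eps^q$. Note that there is a slack factor of roughly $2^{q-1}$ available compared with the natural target $\mu^q>(2\eps)^q$.

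To bound this expectation, let $k$ be the number of $i$ with $p_i=+1$. Writing $J$ for the number of indices in $T$ with $p_i=-1$, $J$ is hypergeometric and $\E_T[\prod_{i\in T}p_i]=\E[(-1)^J]$. With replacement, $J$ would be binomial and $\E[(-1)^J]=\mu^q$ exactly, so the task is to compare sampling without replacement to sampling with replacement. The first approach I would try is a sequential-draw argument. Draw the $q$ indices one at a time. Given the first $l$ draws with signed sum $\sigma_l$, the next draw has conditional mean $(m\mu-\sigma_l)/(m-l)$, which always lies in $[\mu-\frac{2l}{m-l},\,\mu+\frac{2l}{m-l}]$. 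Since $q\leq\eps^2m$ and $\mu>2\eps$, every perturbation is at most about $2\eps^2\leq\eps\mu$, so each conditional mean is at least $\mu(1-O(\eps))\geq\mu/2$ up to constants. One then wants to conclude $\E[(-1)^J]\gtrsim(\mu/2)^q\geq\eps^q$, with the spare factor of $2$ coming from the $2^{q}$ slack.

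The main obstacle is exactly that last conclusion. Conditional means do not multiply directly, because the sign of the running product is correlated with $\sigma_l$. As a fallback I would work with the exact generating function
\[
e_q(p)=[t^q]\,(1+t)^k(1-t)^{m-k}.
\]
Expressed as a Krawtchouk polynomial, this is $\sum_j(-1)^j{k\choose q-j}{m-k\choose j}$. I would compare it term by term with the binomial expansion of $m^q\mu^q/q!$, using ${k\choose q-j}{m-k\choose j}\geq\frac{k^{q-j}(m-k)^j}{(q-j)!\,j!}(1-q^2/\min(k,m-k))$-type estimates. Alternatively, I would apply a saddle-point or Cauchy-integral bound on $[t^q]$ at radius $t\approx q/(m\mu)$. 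In either version the hypothesis $q\leq\eps^2m$ is what keeps the corrections within the factor-$2^q$ slack. Once $\E_T[\prod_{i\in T}p_i]\geq2\eps^q$ is established, the displayed inequality gives $\val(A,b,{[m]\choose q})>\frac12+\eps^q$, which is the contrapositive.
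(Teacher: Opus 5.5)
Your reduction matches the paper's: pass to the contrapositive, fix a good $x$, and show that $E_q\defeq {m\choose q}^{-1}e_q(p_1,\dots,p_m)\geq 2\eps^q$ whenever $\mu\geq 2\eps$ and $q\leq\eps^2 m$. But that inequality is the entire content of the lemma, and the proposal does not prove it. You correctly see why the sequential-draw argument stalls. Neither fallback closes the gap.

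\textbf{Why the fallbacks do not work as stated.} The term-by-term comparison fails because of cancellation. The terms alternate in sign, and their absolute values sum to $m^q/q!$ (or to ${m\choose q}$ in the exact Krawtchouk sum). The main term $(m\mu)^q/q!$ is smaller by a factor $\mu^q$. A relative error of order $q^2/\min(k,m-k)$ on each term therefore gives an additive error that swamps the main term unless $q^2/\min(k,m-k)\ll\mu^q$. This fails for all but very small $q$, and in particular in the regime where the lemma is used. The estimate is also vacuous when $\min(k,m-k)<q^2$, e.g.\ when almost all equations are satisfied. The saddle-point route may be workable, but a rigorous \emph{lower} bound on a coefficient needs control of the Cauchy integral on the whole circle $|t|=\rho$, including near $t=-\rho$. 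You give no such analysis.

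\textbf{The missing idea.} The correlation that blocks the sequential-draw argument can be computed exactly. Write $\sigma_T=\sum_{t\in T}p_t$. Using $p_t^2=1$,
\[
\E_T\Bigl[\prod_{i\in T}p_i\cdot\sigma_T\Bigr]=\E_T\Bigl[\sum_{t\in T}\prod_{i\in T\setminus\{t\}}p_i\Bigr]=q\,E_{q-1}.
\]
So conditioning on the first $q$ draws gives the exact three-term recursion
\[
E_{q+1}=\frac{m\mu\,E_q-q\,E_{q-1}}{m-q},
\]
equivalently $(q+1)e_{q+1}=e_1e_q-(m-q+1)e_{q-1}$.

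The paper inducts on this recursion with the strengthened hypothesis $E_q\geq\eps E_{q-1}$, together with $E_q\geq 2\eps^q>0$. Substituting $E_{q-1}\leq E_q/\eps$ and $\mu\geq 2\eps$ gives
\[
E_{q+1}\geq\frac{2\eps m-q/\eps}{m-q}\,E_q\geq\eps E_q,
\]
since $q\leq\eps^2 m$ means $q/\eps\leq\eps m\leq \eps m$ and $\eps m\geq\eps(m-q)$. Starting from $E_0=1$ and $E_1=\mu\geq 2\eps$, this yields $E_q\geq 2\eps^q$.

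The per-step ratio drops from the ideal $2\eps$ to $\eps$. That is exactly the factor-$2^{q-1}$ slack you noticed, and it is spent absorbing the correction term $qE_{q-1}$.
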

We also bound the number of parts containing short even covers when we randomly partition the set of all size-$q=O(n/\gamma\log n\log\log n)$ subsets of a size-$n^{1+\gamma}$ size into size-$2^{O(n/\log\log n)}$ parts.
\begin{restatable}{lemma}{noevencovers}
    \label{lem:noevencovers}
    Let $m=n^{1+\gamma}$, $q=\frac{6n}{\gamma\log n\log\log n}$, and let $\cH$ be the complete $q$-uniform hypergraph on vertex set $[m]$. Consider a uniformly random partition of $\cH$ into subhypergraphs with $2^{3n/\log\log n}$ edges each. Then with probability $1-2^{-n/\gamma\log n}$, at most a $2^{-8n/\gamma\log n}$ fraction of the subhypergraphs contain an even cover of size at most $\log n$.
\end{restatable}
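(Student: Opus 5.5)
Our plan is a first-moment count of short even covers, followed by Markov's inequality over the parts. Throughout, an even cover of size $\ell$ in a hypergraph means $\ell$ distinct edges $T_1,\dots,T_\ell$ whose symmetric difference is empty, so every vertex is covered an even number of times. Set $N\coloneqq\binom{m}{q}$ and $P\coloneqq 2^{3n/\log\log n}$. A uniformly random partition into parts of size $P$ is equivalent to a uniformly random ordering of the edges of $\cH$, cut into consecutive blocks. Fix one part $\mathcal P$. For any fixed set of $\ell$ distinct edges, the probability that all of them land in $\mathcal P$ is at most $(P/N)^\ell$. Hence
\[
\Pr[\mathcal P \text{ contains an even cover of size at most } \log n] \leq \sum_{\ell=2}^{\log n} E_\ell \,(P/N)^\ell,
\]
where $E_\ell$ is the number of even covers of size $\ell$ in $\cH$.

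\textbf{Counting even covers.} We bound $E_\ell$ by encoding. In an even cover of size $\ell$, each vertex in the union is covered at least twice, so the union has at most $\ell q/2$ vertices. Choose the union first: at most $m^{\ell q/2}$ ways. Then choose each of the $\ell$ edges inside the union: at most $\binom{\ell q/2}{q}^{\ell}\leq (\ell/2)^{q\ell}\cdot 2^{O(q\ell)}$ ways. This gives $E_\ell \leq m^{\ell q/2}(e\ell)^{q\ell}$. We also have $N\geq (m/q)^q$. Therefore each term is at most
\[
\Big(P\cdot m^{q/2}(e\ell)^{q}(q/m)^{q}\Big)^{\ell} = \Big(P\,(e\ell q)^q\, m^{-q/2}\Big)^{\ell}.
\]
We now plug in the parameters. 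With $m=n^{1+\gamma}$ and $q=\frac{6n}{\gamma\log n\log\log n}$, and using $\ell\leq \log n$ and $q\leq n$, we get
\[
\log\!\big((e\ell q)^q m^{-q/2}\big)\leq q\Big(\log n+O(\log\log n)-\tfrac{1+\gamma}{2}\log n\Big).
\]
This is where I expect the main obstacle. For small $\gamma$ this exponent is positive, so the naive count fails. The union bound $\ell q/2$ is too weak, and $q$ is nearly $n$ relative to $m$.

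\textbf{Fix for the main obstacle.} I would first sharpen the count by charging each edge only for its \emph{new} vertices, i.e., those not covered by previously listed edges. Fix an ordering of the edges in the cover. At least half of the total vertex incidences are repeats, so the vertices that appear for the first time number at most $\ell q/2$. Encode each edge by two pieces of data. The new vertices cost $m^{(\text{new count})}$ ways. The old vertices cost at most $(\ell q)^{(\text{old count})}$ ways, since they are chosen from the at most $\ell q$ vertices already seen. This gives
\[
E_\ell\leq 2^{O(\ell q)}\, m^{\ell q/2}(\ell q)^{\ell q/2}\Big/ (q!)^{\ell}\cdot(\text{ordering factors}),
\]
and dividing by $N^\ell\approx (m^q/q!)^\ell$ yields per-edge a factor of roughly $\big(\ell q\, e^{O(1)}/m\big)^{q/2}=n^{-\gamma q/2}\cdot(\log n)^{O(q)}$.

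With this sharper count, using $\gamma q\log n/2 = 3n/\log\log n$, the per-edge factor becomes $P\cdot 2^{-3n/\log\log n+O(q\log\log n)}$. This is too tight to leave a gain, so I would extract the slack from the $(\log n)^{O(q)}$ term more carefully. I would use $\ell\leq\log n$ and the $q!$ saving precisely. If needed, I would use the fact that the ordering lets the first edge be free, which gains a factor $N$ in the $\ell$-th power. The target per-part failure probability is $2^{-9n/\gamma\log n}$, which is tiny compared to the $2^{n/\log\log n}$-scale quantities, so only a small constant-factor slack in the exponent is needed.

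\textbf{Concluding via Markov.} Suppose the per-part failure probability is at most $\delta\leq 2^{-9n/\gamma\log n}$. Then the expected fraction of bad parts is at most $\delta$. By Markov's inequality, the fraction of bad parts exceeds $2^{-8n/\gamma\log n}$ with probability at most $\delta\cdot 2^{8n/\gamma\log n}\leq 2^{-n/\gamma\log n}$, as claimed.
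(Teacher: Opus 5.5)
You have the same skeleton as the paper: fix one part, bound the probability that it contains a given set of $\ell$ distinct edges by $(P/N)^\ell$, union-bound over short even covers, then apply Markov over the parts. But there is a genuine gap. The one step that carries the content, $\sum_\ell E_\ell (P/N)^\ell\le 2^{-9n/\gamma\log n}$, is never established. Your own evaluation ends with ``too tight to leave a gain,'' and the remedies you then offer do not supply it. The first edge's $\binom{m}{q}$ choices are already counted in $E_\ell$, so no factor of $N$ is gained, and the $q!$ saving was already used. So the hypothesis $\delta\le 2^{-9n/\gamma\log n}$ in your Markov step is assumed rather than proved.

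The obstacle is self-inflicted, in two places.

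\textbf{The crude count.} You bounded the number of possible vertex unions by $m^{\ell q/2}$ instead of $\binom{m}{\ell q/2}\le (2em/\ell q)^{\ell q/2}$. With the binomial, which is what the paper uses, the $\ell$-th term is at most $\bigl(P\,(C\ell q/m)^{q/2}\bigr)^{\ell}$ for an absolute constant $C$. This is the same form your ``new vertices'' encoding produces, so the lossy step was never the bound $\ell q/2$ on the union.

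\textbf{The evaluation of the per-edge factor.} You computed this factor incorrectly. Since $\ell q/m=\frac{6\ell}{\gamma\log n\log\log n}\,n^{-\gamma}$ and $P\cdot n^{-\gamma q/2}=2^{3n/\log\log n}\cdot 2^{-\gamma q\log n/2}=1$ exactly, the per-edge factor is $\bigl(\frac{6C\ell}{\gamma\log n\log\log n}\bigr)^{q/2}<1$. That is a gain, not a $(\log n)^{O(q)}$ loss.

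Because the leading terms cancel exactly, there is no ``constant-factor slack'' at leading order. All of the required decay comes from this lower-order factor, so you must keep its dependence on $\ell$. The $\ell$-th term equals $2^{-g(\ell)}$ with $g(\ell)=\frac{\ell q}{2}\log\frac{\gamma\log n\log\log n}{6C\ell}$. This $g$ is increasing for $3\le\ell\le\log n$; edges are distinct, so $\ell\geq 3$. The worst term is therefore $\ell=3$, where $g(3)=\frac{3q}{2}\log\log n+\frac{3q}{2}(\log\log\log n-O(1))$. The first part is exactly $\frac{9n}{\gamma\log n}$, and the second part absorbs the sum over $\ell$.

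If instead you replace $\ell$ by $\log n$ inside the per-edge factor, as you propose, the $\ell=3$ term is only bounded by $(\gamma\log\log n/6C)^{-3q/2}$. That falls far short of $2^{-9n/\gamma\log n}$.
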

We defer the proof of these two lemmas to \Cref{sec:noevencovers}. Using them, we can prove the theorem.
\begin{proof}[Proof of \Cref{thm:semirandomgeneral}]
We present the following algorithm. This algorithm generates equations corresponding to all size-$q$ subsets of the equations, then uses the REFUTE subroutine to refute each part.
\begin{mdframed}
    \begin{algorithm}[Refutation]
    \label{alg:refutation2}    
    \mbox{}
    \begin{description}
        \item[Input:] $a_1,\cdots,a_m\in\F_2^{ n}$, $b_1,\cdots, b_m\in \F_2$
        \item[Output:] $\cV\in [\frac12,1]$ such that at most $\cV$ fraction of equations $\ip{a_i,x}=b_i$ are satisfiable.
        \item[Operation:] \mbox{}

            \begin{enumerate}
                \item Let $r\coloneqq\frac{\log\log n}{2}$, $s\coloneqq n/r$, $M\coloneqq 2^{3n/\log\log n}$, $q\coloneqq 6\left(\frac{n}{\log(m/n)\log\log n}\right)$, $p={m\choose q}/M$.
                \\
                Define the stacked matrix $A\in \F_2^{m\times n}$ of the $a_i$s and the vector $b\in \F_2^m$ of the $b_i$s
                \item Randomly partition the set of all size-$q$ subsets of $[m]$ into $p$ sets $\cF_1, \cdots, \cF_p$ of size $M$.
                \item Initialize $\delta\coloneqq 0$.
                \item For all $1\leq j\leq p$ :
                \begin{itemize}
                    \item Enumerate the elements of $\cF_j$ as $T_1, \cdots, T_M$ and write $\cT_j=(T_1, \cdots, T_M)$.
            \item Update $\delta\leftarrow \delta + \frac1p$REFUTE$(M, A, b, r,q,\cT_j)$.                \end{itemize}
                \item Output $\frac12+\min(\frac12,\delta^{1/q})$.
            \end{enumerate}                
    \end{description}
    \end{algorithm}
\end{mdframed}
This algorithm calls the subroutine \Cref{alg:semirandomrefute} a total of $p=2^{O(n/\log\log n)}$ times with $r=\frac12\log\log n$. We apply \Cref{lem:subroutineproperties}. By \Cref{item:runtime}, the total runtime for all $p$ calls is $p\cdot \poly(M)=2^{O(n/\log\log n)}$ time. By \Cref{item:valbound} applied to each call to the subroutine, we have $\val(A,b,{[m]\choose q})\leq\frac12+\delta$, so by \Cref{lem:fullinduced}, $\val(A,b)\leq \frac12+\delta^{1/q}$, proving the first condition.
By \Cref{lem:noevencovers}, with high probability, at most $2^{-8n/\gamma\log n}\leq\frac{\eps^q}{3}$ fraction of calls will contain a length $\leq 2^{r+1}$ even cover, in which case the subroutine may output $\frac12$. Otherwise, by \Cref{item:termination} and a Chernoff bound, with high probability, at most a $\frac{\eps^q}{3}$ fraction of the remaining calls to the subroutine will terminate early and output $\frac12$. By a union bound, with probability at least $1-{m\choose q}/(3M^2(\eps^q/3)^{2^{r+1}}2^{n/6r})=1-o(1)$, none of the remaining at most ${m\choose q}/M$ calls which do not terminate early output some value $\cB'\geq\frac{\eps^q}{3}$. Therefore, with high probability, we have $\delta\leq \frac{\eps^q}{3}+\frac{\eps^q}{3}+p\cdot\frac{\eps^q}{3p}=\eps^q$ and thus the second condition follows.     
\end{proof}

\subsection{Proof of \Cref{lem:subroutineproperties}} \label{sec:semirandomlemmas}

We present the following algorithm, which takes in a linear system $(A,b)$, a set of index sets assumed to be of size $q$, and a parameter $r$. It operates a set of synthetic equations formed by summing up the equations of $(A,b)$ indexed by the provided index sets.
The algorithm, given a $\cT$-induced system, zeroes out blocks of size $n/r$ one by one, creating a new induced system at each step, until after $r$ steps the $LHS$ of each equation is the zero vector. This allows us to determine an upper bound on the value of this final induced system, which implies an upper bound on $\val(A,b)$ as we will prove later. During the operation of the algorithm, $f_k$ keeps track of the RHS of combined equations, $g_k$ keeps track of which of original equations contribute to a combined equation, and $h_k$ which equations of the input to the subroutine contribute to a combined equation.
\begin{mdframed}
    \begin{algorithm}[Refute]
    \label{alg:semirandomrefute}    
    \mbox{}
    \begin{description}
        \item[Input:] $M$, $A\in\F_2^{m\times n}$, $b\in \F_2^m$, $r \in \N$, $q \in \N$, $\cT\in ({2^{[m]}})^M$
        \item[Output:] $\eps\in [0,\frac12]$
        \item[Operation:] \mbox{}
            \begin{enumerate}[(1)]
            \item If $M<50\cdot 2^{7n/6r+n/r^2}$, terminate and output $\frac12$.
            \item Define $s=n/r$ and $S_0\coloneqq \{T_i:i\in [M]\}$. Define functions $f_0:2^{[m]}\mapsto \F_2$, $g_0:2^{[m]}\mapsto 2^{[m]}$, and $h_0:2^{[m]}\mapsto 2^{[M]}$ such that for all $i\in [M]$, $f_0(T_i)=\sum_{t\in T_i}b_t$, $g_0(T_i)=T_i$, and $h_0(T_i)=\{i\}$.
            \item For $k=0, \cdots, r-1$:
                \begin{enumerate}
                    \item For all $v\in \F_2^{s}$, define the multiset $B_{k,v}\coloneqq\{T\in S_k:(\sum_{t\in T} a_t)_{[n-(k+1)s+1,n-ks]}=v\}$. If $|B_{k,v}|$ is not a multiple of $2^{n/r^2}$, delete up to $2^{n/r^2}-1$ elements of $B_{k,v}$ to make it a multiple of $2^{n/r^2}$.
                    \label{step:formbkvs}
                    \item Randomly partition $B_{k,v}$ into $t_{k,v}\coloneqq|B_{k,v}|\cdot 2^{-n/r^2}$ sets $B_{k,v}^{(1)},\cdots, B_{k,v}^{(t_{k,v})}$
                    \item For all $v\in \F_2^s$ and $j\in [t_v]$, define the multiset $B_{k,v}^{*(j)}= \{T^{(1)}\oplus T^{(2)}:T^{(1)}\neq T^{(2)}\in B_{k,v}^{(j)}\}$. For all $T^{(1)}\neq T^{(2)}\in B_{k,v}^{(j)}$, set:
                    \begin{itemize}
                        \item $f_{k+1}(T^{(1)}\oplus T^{(2)})=f_k(T^{(1)})+f_k(T^{(2)})$,
                        \item $g_{k+1}(T^{(1)}\oplus T^{(2)})=g_k(T^{(1)})\oplus g_k(T^{(2)})$,
                        \item $h_{k+1}(T^{(1)}\oplus T^{(2)})=h_k(T^{(1)})\oplus h_k(T^{(2)})$.
                        \item If $g_{k+1}(T^{(1)}\oplus T^{(2)})=\{\}$ and $h_{k+1}(T^{(1)}\oplus T^{(2)})\neq \{\}$, terminate and output $\frac12$.\label{step:elimecs}
                    \end{itemize} \label{step:combine}
                    \item $S_{k+1}\leftarrow\bigcup_{v\in \F_2^s}\cup_{i\in [t_v]}B_v^{*(i)}$
                    \item Delete all $a\in S_{k+1}$ such that $h_{k+1}(a)=\{\}$. If we delete more than a $2^{-n/6r}$ fraction of $S_{k+1}$, terminate and output $\frac12$.\label{step:prune}
                \end{enumerate}
            \item Count the number of pairs $T'\neq T''\in S_r$ such that $h_r(T')=h_r(T'')$. If this is larger than a  $\frac{1}{M}\cdot 2^{-n/6r}$ fraction of pairs, terminate and output $\frac12$. \label{step:finalcount}
            \item If there exists a pair $T'\neq T''\in S_r$ such that $g_r(T')=g_r(T'')$ and $h_r(T')\neq h_r(T'')$, terminate and output $\frac12$.\label{step:finalelimec}
            \item Let $\Delta\in [-\frac12,\frac12]$ such that $f_r(a)=0$ for $\frac{1}{2}+\Delta$ fraction of $a\in S_r$. Output \\$\min(\frac12,\max(2^{-n/r^22^{r+1}},(\max(\Delta,0))^{1/2^r}))$.
        \end{enumerate}
    \end{description}
    \end{algorithm}
\end{mdframed}
We note that for any $0\leq k\leq r$, the set of equations $\sum_{t\in T}a_t = \sum_{t\in T}b_t$ for all $T\in S_k$ is the $S_k$-induced system of $(A,b)$.

\begin{remark}
    We assume that if some $T\in S_k$ appears multiple times in the multiset $S_k$, then each copy of $T$ can be identified, so $f_k,g_k,h_k$ are functions. In the description of the algorithm, $T^{(1)}\neq T^{(2)}$ indicates that $T^{(1)}$ and $T^{(2)}$ are separate elements of the multiset that may be the same set. 
\end{remark}
We will prove the following lemmas, which together prove \Cref{lem:subroutineproperties}.
\begin{restatable}{lemma}{runtime}
\label{lem:runtime}
    \Cref{alg:semirandomrefute} runs in $r\poly(M2^{n/r})$ time.
\end{restatable}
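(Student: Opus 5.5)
The plan is to bound the size of every multiset $S_k$ maintained by \Cref{alg:semirandomrefute}, and then charge polynomial work per element per round. Step (1) either terminates immediately or leaves $|S_0| = M$. In round $k$, each nonempty part $B_{k,v}^{(j)}$ has exactly $2^{n/r^2}$ elements. Combining all ordered or unordered pairs inside a part produces at most $2^{2n/r^2}$ new elements from $2^{n/r^2}$ old ones. Summing over parts gives
\[
|S_{k+1}| \leq |S_k| \cdot 2^{n/r^2} .
\]
Step (\ref{step:prune}) only deletes elements, so the bound survives it. By induction,
\[
|S_k| \leq M \cdot 2^{kn/r^2} \leq M \cdot 2^{n/r} \quad \text{for all } k \leq r .
\]
Hence every multiset the algorithm touches has size at most $M2^{n/r}$.

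Next I account for the cost of one round, taking $|S_k| \leq M2^{n/r}$ as given.
\begin{itemize}
\item \emph{Bucketing.} Computing the block $(\sum_{t\in T}a_t)_{[n-(k+1)s+1,\,n-ks]}$ for each $T$ requires the partial sums, so I store $\sum_{t\in T} a_t$ alongside $f_k, g_k, h_k$. Each new sum is then the XOR of two stored $n$-bit vectors.
\item \emph{Bucket setup.} Indexing the $2^s = 2^{n/r}$ buckets, deleting the remainders, and randomly partitioning each bucket cost $\poly(M2^{n/r})$.
\item \emph{Pair updates.} Updating $f_{k+1}, g_{k+1}, h_{k+1}$ for each pair means XORing subsets of $[m]$ and $[M]$. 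I will check that $m$ is polynomially bounded by $M 2^{n/r}$, or else simply include $\poly(m)$ in the cost. The representation $h$ lives in $[M]$, and $g$ can be stored as a sorted list of size at most $q2^{k}$. Either way each update is $\poly(M2^{n/r})$.
\item \emph{Termination checks and pruning.} These are linear scans.
\end{itemize}

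The final steps (\ref{step:finalcount}) and (\ref{step:finalelimec}) compare all pairs in $S_r$. That is quadratic in $|S_r| \leq M2^{n/r}$, hence still polynomial. Computing $\Delta$ is a single count.

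Summing over the $r$ rounds gives $r \cdot \poly(M2^{n/r})$, which is the claim.

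The only point needing care is that the size blowup is multiplicative by exactly $2^{n/r^2}$ per round, with $r$ rounds, so the exponent stays at $n/r$. It must not compound, say by repeatedly squaring. This holds because the parts are fixed at size $2^{n/r^2}$ after the deletions in step (\ref{step:formbkvs}). I also need the representation sizes of $g_k$ and $h_k$ to stay polynomial, which the bounds above guarantee.
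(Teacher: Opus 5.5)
Your proposal is correct and takes the same route as the paper. The paper bounds $|S_k|$ using the upper half of \Cref{prop:boringcalculation}, which is exactly your per-round estimate $|S_{k+1}|\le 2^{n/r^2}|S_k|$, yielding $|S_k|\le M2^{kn/r^2}\le M2^{n/r}$. It then asserts that each step of an iteration costs $\poly(|S_k|)$, which your bookkeeping of buckets, partial sums, $f_k,g_k,h_k$ and the pairwise checks spells out in more detail. Your caveat about $m$ is harmless: the paper tacitly ignores it, and $m\le M$ in both applications (\Cref{alg:refutation1} and \Cref{alg:refutation2}).
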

\begin{restatable}{lemma}{semirandomcondone}
\label{lem:semirandomcondone}
    Suppose $r\geq 4$ and let $\cB$ be the output of \Cref{alg:semirandomrefute}. Then $\val(A,b,\cT)\leq \frac12+\cB$.
\end{restatable}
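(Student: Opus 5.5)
We want to prove \Cref{lem:semirandomcondone}: whenever the algorithm outputs $\cB$, we have $\val(A,b,\cT)\leq \frac12+\cB$. If the algorithm terminates early it outputs $\frac12$, and the bound is trivial. Likewise, if the final output equals $\frac12$ or the floor $2^{-n/r^22^{r+1}}$ is larger than $\Delta^{1/2^r}$, only the monotone upper direction matters. So the real task is to show that, on a full run,
\[
\bias(A,b,\cT)\;\le\;\max(\Delta,0)^{1/2^r}.
\]
Write $p_T(x)=(-1)^{\ip{\sum_{t\in T}a_t,x}+f_k(T)}$ for $T\in S_k$, and set $\beta_k(x)=\frac{1}{|S_k|}\sum_{T\in S_k}p_T(x)$. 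Since the bias of a system is half the maximal average correlation, it suffices to prove for every fixed $x$ the inequality $\beta_0(x)^{2^r}\lesssim\beta_r(x)=2\Delta$. We will track the losses coming from deletions and from the factor $2$ carefully.

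The core step is one round $k\to k+1$, done via Cauchy--Schwarz within sub-buckets. Fix $x$ with $\beta_k(x)\ge0$ (the case $\beta_k<0$ is handled by replacing $x$ with its negation in the definition, or by noting that $\val$ is a maximum over assignments of the fraction satisfied, so we only need an upper bound on positive bias). Each sub-bucket $B^{(j)}_{k,v}$ has size exactly $L=2^{n/r^2}$, and all its members agree on block $k$, so $p_{T^{(1)}}p_{T^{(2)}}=p_{T^{(1)}\oplus T^{(2)}}$ in the induced system with RHS $f_{k+1}$. Hence
\[
\sum_{T'\in B^{*(j)}_{k,v}}p_{T'}(x)=\tfrac12\Big[\Big(\sum_{T\in B^{(j)}_{k,v}}p_T(x)\Big)^2-L\Big].
\]
By Cauchy--Schwarz over the $N$ sub-buckets,
\[
\Big(\frac1{NL}\sum p_T\Big)^2\le \frac{1}{N}\sum_j\Big(\frac1L\sum_{B^{(j)}}p_T\Big)^2 = \frac{2}{NL^2}\sum_{T'\in S_{k+1}^{\mathrm{pre}}}p_{T'} + \frac1L .
\]
Normalizing by $|S^{\mathrm{pre}}_{k+1}|=NL(L-1)/2$ gives $\beta'_{k}(x)^2\le \frac{L-1}{L}\beta^{\mathrm{pre}}_{k+1}(x)+\frac1L$. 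The elements deleted in Step (a) number at most $2^s L$ out of $|S_k|$, and those deleted in the pruning step (\ref{step:prune}) are at most a $2^{-n/6r}$ fraction. Each deletion perturbs the averages by at most twice the deleted fraction. So $\beta_k^2\le\beta_{k+1}+\eta$, where $\eta\lesssim 2^{-n/r^2}+2^{s}L/|S_k|+2^{-n/6r}$. This is tiny because the non-termination condition forces $M\ge 50\cdot 2^{7n/6r+n/r^2}$, and $|S_k|$ only grows.

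Iterating the one-round inequality gives $\beta_0^{2^r}\le \beta_r+O(2^r\eta)$, since $\beta_r=2\Delta$ at the end, where every LHS is zero. The remaining step is to convert this into the claimed bound, namely absorbing the additive error and the factor $2$ via $(\cdot)^{1/2^r}$. I expect this to be the main obstacle: the additive $\eta$ has to be dominated by the floor term $2^{-n/r^22^{r+1}}$ raised to the $2^r$, which is about $2^{-n/(2r^2)}$, and the constant $2^{1/2^r}$ from $2\Delta$ must be absorbed. To make this work, I would first try to define the bias on the $\pm1$ scale throughout, so that $\bias=\beta/2$, and use $r\ge4$ to check $(\beta_0/2)^{2^r}\le \beta_0^{2^r}/2^{16}$, which absorbs the extra factor. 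If $\eta$ is not small enough against the floor, I would instead bound the deletion losses multiplicatively rather than additively, exploiting that deletions in Step (a) are only $2^{s+n/r^2}/|B|$ relative and that $M$ is large.
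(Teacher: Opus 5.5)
Your proposal is correct and is essentially the paper's argument: Cauchy--Schwarz over the equal-size sub-buckets $B^{(j)}_{k,v}$ squares the correlation in each of the $r$ rounds until every left-hand side vanishes. The only difference is bookkeeping: the paper proves the per-round bound $\bias(A,b,S_{k+1})\geq\bias(A,b,S_k)^2$, absorbing the $1/L$, trimming and pruning losses into the spare factor $2$ in $2\gamma^2$ (which needs $\gamma\geq 2^{-n/4r^2}$, supplied by the floor case), whereas you accumulate additive errors and absorb them once at the end.

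That final step, which you left open, does go through with your first plan. Step~(1) makes the trimming loss $O(2^{-n/6r})$, so $\eta=O(2^{-n/r^2}+2^{-n/6r})$. Since $2^{-n/6r}\leq 2^{-n/2r^2}$ for $r\geq 3$, you get $2^{-2^r}\cdot O(2^r\eta)\leq(1-2^{1-2^r})\,2^{-n/2r^2}$, and hence $\bias(A,b,\cT)\leq\max\bigl(2^{-n/r^22^{r+1}},\max(\Delta,0)^{1/2^r}\bigr)$. Note that the floor is genuinely needed here, so the target you first stated, $\bias(A,b,\cT)\leq\max(\Delta,0)^{1/2^r}$, is too strong.
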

\begin{lemma}\label{lem:termination}
If $M\geq 2^{4n/3r+2n/r^2+5r}$ and $\cT$ does not contain even covers of length at most $2^{r+1}$, then \Cref{alg:semirandomrefute} terminates early and outputs $\frac12$ with probability at most $25\cdot 2^{-n/6r}$ over the randomness introduced by the algorithm.
\end{lemma}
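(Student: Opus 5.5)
The plan is to go through every place where \Cref{alg:semirandomrefute} can terminate early. For each one, we either show it is impossible under the hypotheses, or bound its probability over the random partitions. At the end we take a union bound. The early exits are step (1), the even-cover check in \Cref{step:elimecs}, the pruning check in \Cref{step:prune}, the collision count in \Cref{step:finalcount}, and the check in \Cref{step:finalelimec}.

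\textbf{The deterministic exits.} First, step (1) never fires. Since $4n/3r > 7n/6r$ and we have the additional $n/r^2+5r$ slack, $M\geq 2^{4n/3r+2n/r^2+5r}\geq 50\cdot 2^{7n/6r+n/r^2}$ once $n$ is large.

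Second, we use the invariant, checked by induction on $k$, that for every $T\in S_k$ we have $|h_k(T)|\leq 2^k$ and $g_k(T)=\bigoplus_{i\in h_k(T)}T_i$ (symmetric difference).
\begin{itemize}
\item Suppose \Cref{step:elimecs} fires. Then $h_{k+1}(T^{(1)}\oplus T^{(2)})$ is a nonempty set of at most $2^{k+1}\leq 2^r$ indices whose sets $T_i$ XOR to $\emptyset$. That is an even cover in $\cT$ of length at most $2^{r+1}$, which the hypothesis excludes.
\item Suppose \Cref{step:finalelimec} fires. Then $h_r(T')\oplus h_r(T'')$ is nonempty, has size at most $2^{r+1}$, and its sets XOR to $g_r(T')\oplus g_r(T'')=\emptyset$. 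This is again a forbidden even cover.
\end{itemize}
So the only randomness-dependent exits are \Cref{step:prune} and \Cref{step:finalcount}. Both are about \emph{collisions}, meaning pairs of elements of the same sub-bucket $B_{k,v}^{(j)}$ with equal $h_k$-value.

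\textbf{Sizes of the $S_k$.} We track $|S_k|$ first. Step (a) deletes at most $2^{s}\cdot 2^{n/r^2}$ elements in each round. Since $|S_k|\geq M$, this is at most a $2^{n/r+n/r^2}/M\leq 2^{-n/3r}$ fraction. The combining step then multiplies the size by $2^{n/r^2}-1$. Hence $|S_{k}|\approx M(2^{n/r^2}-1)^k$, up to a $(1-2^{-n/3r})^{r}$ factor.

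\textbf{Collision bounds (the main obstacle).} The key technical step is to bound the expected number of collisions. Let $D_k$ be the number of pairs in $S_k$ with equal $h_k$. A pair of elements of $B_{k,v}$ lands in a common sub-bucket with probability $(2^{n/r^2}-1)/(|B_{k,v}|-1)$. Each such co-located colliding pair produces an element with $h_{k+1}=\emptyset$, which is deleted in \Cref{step:prune}. New collisions at level $k+1$ come in two kinds:
\begin{itemize}
\item pairs $T^{(1)}\oplus T^{(2)}$, $T^{(3)}\oplus T^{(4)}$ built from colliding pairs at level $k$;
\item coincidences $h_k(T^{(1)})\oplus h_k(T^{(2)})=h_k(T^{(3)})\oplus h_k(T^{(4)})$.
\end{itemize}
I would bound $\E[D_{k+1}]$ by $2^{2n/r^2}$ times $D_k$ plus terms of order $|S_{k+1}|^2/M$. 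The second kind requires the four leaf-multisets to pair up, and the bucketing by a fresh block forces a pairing that costs roughly a $1/M$ factor.

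Iterating this over the $r$ rounds, then applying Markov's inequality at each round, should give the following. The pruned fraction exceeds $2^{-n/6r}$ with probability $O(2^{-n/6r})$. The final collision fraction exceeds $\frac1M 2^{-n/6r}$ with probability $O(2^{-n/6r})$. The slack $2^{n/r^2+5r}$ in the lower bound on $M$ absorbs the $2^{O(r)}$ losses from counting tree shapes and from the union bound over the $r$ rounds, giving a total of $25\cdot 2^{-n/6r}$.

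The hardest part is the dependence across rounds. Earlier random partitions determine which collisions exist. I would handle this by conditioning on the history up to round $k$ and bounding $\E[D_{k+1}\mid\text{history}]$ in terms of $D_k$ and $|S_k|$ only.
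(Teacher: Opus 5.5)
Your skeleton matches the paper's: rule out the deterministic exits, then use Markov plus a union bound on \Cref{step:prune} and \Cref{step:finalcount}. The deterministic part is correct. With the invariant $g_k(T)=\bigoplus_{i\in h_k(T)}T_i$ and $|h_k(T)|\le 2^k$, firing \Cref{step:elimecs} or \Cref{step:finalelimec} would exhibit an even cover of length at most $2^{r+1}$. You also dispose of step (1), which the paper skips. The gap is in the probabilistic part, which is where the lemma's real content lies.

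\textbf{What fails in your sketch.} First, an additive term ``of order $|S_{k+1}|^2/M$'' is too large. The threshold in \Cref{step:finalcount} is only about $2^{-n/6r}\binom{|S_r|}{2}/M$ pairs. Even the last additive term alone is a factor $2^{n/6r}$ above it, so Markov gives nothing. You need an extra factor of roughly $2^{n/r}/M\le 2^{-n/3r}$, and that is exactly where the hypothesis $M\ge 2^{4n/3r+\cdots}$ enters. Second, the saving cannot come from ``bucketing by a fresh block''. Here $A$ is arbitrary: with $A=0$, every element lands in a single bucket in every round. The only randomness is the partition into sub-buckets. Third, for \Cref{step:prune}, the expected number of pruned elements in round $k$ is $\sum_v D_{k,v}\frac{2^{n/r^2}-1}{|B_{k,v}|-1}$, where $D_{k,v}$ counts pairs in $B_{k,v}$ with equal $h_k$. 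Since bucket sizes are adversarial, you must know that collisions are not concentrated in small buckets. A global count $D_k$ does not tell you this. (Minor: each round multiplies the size by $(2^{n/r^2}-1)/2$, not $2^{n/r^2}-1$.)

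\textbf{The missing idea: a deterministic multiplicity bound.} This is \Cref{lem:gkibound}. For every realization of the partitions, each $i\in[M]$ lies in $h_k(T)$ for at most $2^{kn/r^2}$ elements $T\in S_k$. The reason is that an element of $S_{k+1}$ containing $i$ is determined by a parent containing $i$ together with one of fewer than $2^{n/r^2}$ sub-bucket partners.
\begin{itemize}
\item \emph{Pruning.} Every nonempty $h_k$-value therefore occurs at most $2^{kn/r^2}$ times, so $D_{k,v}\le 2^{kn/r^2-1}|B_{k,v}|$. This is linear in $|B_{k,v}|$ and cancels the co-location probability. Summing over the $2^{n/r}$ buckets gives at most $2^{n/r+(k+1)n/r^2-1}$ expected pruned elements, roughly a $2^{n/r+2k}/M$ fraction of $|S_{k+1}|$.
\item \emph{Final count.} Fix $i$ in the common value $h_r(T^{(1)}\oplus T^{(2)})=h_r(T^{(3)}\oplus T^{(4)})$. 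The multiplicity bound controls the choices of $T^{(1)},T^{(3)}$ containing $i$, and of $T^{(4)}$, whose $h_{r-1}$-value is then forced. This gives at most $2^{1+r+2n/r^2+3n/r}$ expected collisions and covers both of your ``kinds'' at once.
\end{itemize}
Because these bounds hold conditionally on any history, each round is analyzed given $S_k$ alone. The cross-round dependence you identify as the hardest part disappears, and no recursion in $D_k$ is needed.
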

\begin{lemma}\label{lem:epsbound}
    If $(A,b)$ is a semirandom linear system and \Cref{alg:semirandomrefute} does not terminate prematurely and $M\geq r2^{n/r+n/r^2}$, then for any $\eps>2^{-n/r^22^{r+1}}$, $\cB<\eps$ with probability at least $1-1/(3M\eps^{2^{r+1}}2^{n/6r})$.
\end{lemma}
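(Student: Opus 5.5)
The plan is to exploit that, conditioned on non-termination, the right-hand sides $f_r(T)$ of the final equations are nearly independent uniform bits. I will then bound the probability that $\Delta$ is large using a second-moment (Chebyshev) bound over the randomness of $b$.

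The first step is a structural one. Each $T\in S_r$ has $f_r(T)=\sum_{t\in g_r(T)}b_t$. Because the algorithm did not terminate at \cref{step:elimecs}, every surviving $T$ satisfies $g_r(T)\neq\{\}$: pruning at \cref{step:prune} removed the elements with $h=\{\}$, and termination rules out $h\neq\{\}$ with $g=\{\}$. Since $b$ is uniform and independent of $A$, each $f_r(T)$ is therefore a uniform bit.

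For two distinct elements $T'\neq T''$, the sum $f_r(T')+f_r(T'')$ is uniform whenever $g_r(T')\neq g_r(T'')$. Then $\E[(-1)^{f_r(T')+f_r(T'')}]=0$, so the pair is pairwise independent. By \cref{step:finalelimec}, $g_r(T')=g_r(T'')$ can only occur together with $h_r(T')=h_r(T'')$. By \cref{step:finalcount}, such pairs make up at most a $\frac1M 2^{-n/6r}$ fraction of all pairs.

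One subtlety needs care here. The termination events depend on $A$, on $\cT$, and on the algorithm's internal randomness, but not on $b$, since every check involves only $g,h$. Conditioning on non-termination therefore leaves $b$ uniform.

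The second step is the second-moment bound. Let $N=|S_r|$ and $X=\sum_{T\in S_r}(-1)^{f_r(T)}=2\Delta N$. Then
\[
\E[X^2]=N+\sum_{T'\neq T''}\E[(-1)^{f_r(T')+f_r(T'')}]\leq N+\frac{N^2}{M}2^{-n/6r}.
\]
The output is at least $\eps$ exactly when $\Delta\geq\eps^{2^r}$, given the clamp with $\eps>2^{-n/r^22^{r+1}}$. By Chebyshev,
\[
\Pr[\Delta\geq\eps^{2^r}]\leq\frac{\E[X^2]}{4N^2\eps^{2^{r+1}}}\leq\frac{1}{4\eps^{2^{r+1}}}\Big(\frac1N+\frac{2^{-n/6r}}{M}\Big).
\]

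The third step lower bounds $N$. I need $N\geq M\cdot 2^{n/6r}\cdot c$ for a suitable constant so that the $1/N$ term is absorbed and the total is at most $1/(3M\eps^{2^{r+1}}2^{n/6r})$. At each round, bucketing into $2^{s}$ buckets loses at most $2^s\cdot 2^{n/r^2}$ elements to rounding. The hypothesis $M\geq r2^{n/r+n/r^2}$ should ensure this loss is a small fraction across all $r$ rounds.

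Each surviving sub-bucket of size $2^{n/r^2}$ yields about $2^{2n/r^2}$ pairs, so the size multiplies by roughly $2^{n/r^2}$ per round. Pruning keeps a $1-2^{-n/6r}$ fraction. This gives $N\gtrsim M2^{n/r}$ up to constant factors, which exceeds $M2^{n/6r}$ comfortably.

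The main obstacle is this counting: tracking the rounding deletions carefully under the stated bound on $M$, and making sure the constants produce exactly the factor $3$. If the simple estimate is too lossy, I would fall back to using only that $N\geq M$ times a growth factor of at least $2^{n/r^2}$ per round, where $r\geq4$ gives $r\cdot n/r^2\geq n/6r$.
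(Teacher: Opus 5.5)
Your proposal is correct and is essentially the paper's proof: non-termination gives $g_r(T)\neq\{\}$, \Cref{step:finalelimec} and \Cref{step:finalcount} limit the correlated pairs to a $\frac1M2^{-n/6r}$ fraction, and Chebyshev finishes once $|S_r|\geq 3M2^{n/6r}$, which the paper takes from \cref{prop:boringcalculation} and you rederive. You also make explicit that no termination check depends on $b$, which the paper leaves implicit; your size estimate actually loses a $2^{-O(r)}$ factor (not a constant) from the $\binom{2^{n/r^2}}{2}$ count and rounding, but this is absorbed by the $2^{5n/6r}$ slack since $n\gg r^2$.
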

First, we have the following numerical bounds on the sizes of the $S_k$ in terms of $|S_0|=M$.
\begin{proposition}\label{prop:boringcalculation}
    For all $1\leq k\leq r$ such that \Cref{alg:semirandomrefute} does not terminate before finishing iteration $k$, if where $r\geq 4$ and $n\geq 4r^2$, $2^{kn/r^2-2k}(|S_0|-k2^{n/r+n/r^2})\leq |S_k|\leq 2^{kn/r^2-k}|S_0|$. Furthermore, if $M\geq r2^{n/r+n/r^2}$, then $|S_{k-1}|\leq 4\cdot 2^{-n/r^2}|S_k|$. 
\end{proposition}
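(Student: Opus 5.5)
The plan is to analyze one iteration of the loop in \Cref{alg:semirandomrefute} and then induct on $k$. Write $c\coloneqq n/r^2$, $L\coloneqq 2^{c}$, $D\coloneqq 2^{n/r+n/r^2}$ and $\delta\coloneqq 2^{-n/6r}$. The hypotheses give $c\geq 4$ (from $n\geq 4r^2$), so $L\geq 16$. They also give $n/6r\geq 4r/6\geq 8/3$ (from $r\geq 4$), so $\delta\leq 2^{-8/3}<0.16$.

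\textbf{One step, three effects.} Fix an iteration $k$ that completes without early termination. Its effect on $|S_k|$ splits into three parts.
\begin{enumerate}
\item \emph{Truncation in Step~\ref{step:formbkvs}.} There are $2^s=2^{n/r}$ buckets $B_{k,v}$, and each loses at most $L-1$ elements. So the truncated multiset $S_k'$ satisfies $|S_k|-D\leq |S_k'|\leq |S_k|$.
\item \emph{Combining in Step~\ref{step:combine}.} $S_k'$ is split into $|S_k'|/L$ sub-buckets of size exactly $L$. Each sub-bucket contributes one element per unordered pair $T^{(1)}\neq T^{(2)}$, i.e.\ $\binom{L}{2}$ elements. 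So before pruning we have exactly $|S_k'|(L-1)/2$ elements.
\item \emph{Pruning in Step~\ref{step:prune}.} Since we did not terminate, at most a $\delta$ fraction is removed.
\end{enumerate}
Combining these,
\[
\tfrac{L-1}{2}(1-\delta)\,(|S_k|-D)\;\leq\;|S_{k+1}|\;\leq\;\tfrac{L}{2}|S_k| = 2^{c-1}|S_k|.
\]
For the lower constant, $\tfrac{L-1}{2}(1-\delta)\geq L\cdot\tfrac{15}{16}\cdot 0.84/2 > 0.39L$.

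\textbf{First claim by induction.} Iterating the upper bound gives $|S_k|\leq 2^{kc-k}|S_0|$ immediately. For the lower bound, the one-step inequality gives $|S_{k+1}|\geq 2^{c-2}(|S_k|-D)$. Suppose $|S_k|\geq 2^{kc-2k}(|S_0|-kD)$. Then
\[
|S_{k+1}|\geq 2^{c-2}\bigl(2^{k(c-2)}(|S_0|-kD)-D\bigr)\geq 2^{(k+1)(c-2)}(|S_0|-(k+1)D),
\]
because $2^{k(c-2)}\geq 1$. This also covers the case where the bound is negative, which is trivial.

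\textbf{Second claim.} We need $|S_k|\geq \tfrac14 L|S_{k-1}|$. Here the loss from truncation must be controlled more carefully, and this is the main obstacle: the crude factor $1/4$ from the lower bound above, combined with losing $D$, would only give the constant $8$. The fix is to show $D\leq |S_{k-1}|/4$.
\begin{itemize}
\item For $k=1$: $|S_0|=M\geq rD\geq 4D$.
\item For $k\geq 2$: the first claim gives $|S_{k-1}|\geq 2^{(k-1)(c-2)}(M-(k-1)D)\geq 4\cdot D$. This uses $c-2\geq 2$ and $M-(k-1)D\geq (r-k+1)D\geq D$.
\end{itemize}
Then
\[
|S_k|\geq 0.39L\cdot\tfrac34|S_{k-1}|\geq \tfrac14 L|S_{k-1}|,
\]
i.e.\ $|S_{k-1}|\leq 4\cdot 2^{-n/r^2}|S_k|$.
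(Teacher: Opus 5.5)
Your proof is correct and follows the same route as the paper. Both bound a single iteration by tracking three effects: at most $2^{n/r+n/r^2}$ elements are lost to truncation, pairing multiplies the count by exactly $\binom{2^{n/r^2}}{2}/2^{n/r^2}$, and at most a $2^{-n/6r}$ fraction is lost to pruning; both then iterate. You are more careful than the paper, which never checks that $\frac{2^{n/r^2}-1}{2}(1-2^{-n/6r})\geq 2^{n/r^2}/4$ and gives no argument for the ``furthermore'' clause. Your observation that $M\geq r2^{n/r+n/r^2}$ forces $2^{n/r+n/r^2}\leq |S_{k-1}|/4$ is exactly what that clause needs.
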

\begin{proof}[Proof of \Cref{prop:boringcalculation}]
    A direct calculation shows that $|S_k|\leq |S_0|2^{-kn/r^2}{2^{n/r^2}\choose 2}^k\leq \frac{1}{2^k}2^{kn/r^2}|S_0|$. We note that in \Cref{step:formbkvs}, we delete at most $2^{n/r+n/r^2}$ total elements of $S_{k-1}$, and in \Cref{step:prune}, we delete at most a $2^{-n/6r}$ fraction of $S_k$, so $|S_k|\geq (|S_{k-1}|-2^{n/r+n/r^2})2^{-n/r^2}{2^{n/r^2}\choose 2}(1-2^{-n/6r})\geq (|S_0|-k2^{n/r+n/r^2})2^{kn/r^2}/4^k$.
\end{proof}
We use this to bound the runtime.
\begin{proof}[Proof of \Cref{lem:runtime}]
    At each of the $r$ steps of the subroutine, we have $|S_k|\leq |S_r|\leq M2^{2n/r}$ by \Cref{prop:boringcalculation}, and each step in iteration $k$ takes at most $\poly(|S_k|)$ time, for a total runtime of $r\poly(M2^{n/r})$.
\end{proof}
Next, we lower bound the value of the $S_{k+1}$-induced linear system as a function of the value of the $S_k$-induced linear system.
\begin{lemma}\label{lem:eachiteration}
    Let $0\leq k\leq r-1$ such that $2^{n/r+n/r^2}\leq 2^{-n/6r}M/50$. 
     If $\bias(A,b,S_k)\geq 2^{-n/4r^2}$, then $\bias(A,b,S_{k+1})\geq \bias(A,b,S_k)^2$.
\end{lemma}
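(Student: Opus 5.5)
Fix an assignment $x$ achieving $\bias(A,b,S_k)=\beta\ge 2^{-n/4r^2}$. For each $T\in S_k$ set $p_T(x)=(-1)^{\ip{\sum_{t\in T}a_t,x}+f_k(T)}$, so that $\frac{1}{|S_k|}\sum_{T\in S_k}p_T(x)=2\beta$. The plan is the Cauchy--Schwarz argument from the overview, applied within sub-buckets and evaluated at this same $x$. Two observations make this work. First, $f_{k+1}(T^{(1)}\oplus T^{(2)})=f_k(T^{(1)})+f_k(T^{(2)})$, so the combined equation's sign at $x$ is $p_{T^{(1)}}(x)p_{T^{(2)}}(x)$. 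Second, $T^{(1)}$ and $T^{(2)}$ lie in the same bucket, so their sum has the $k$-th block zeroed; this matters only for later iterations, not for the sign identity.

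\textbf{Step 1 (bucket deletion).} Step~\ref{step:formbkvs} deletes at most $2^{s}\cdot 2^{n/r^2}=2^{n/r+n/r^2}$ elements. By \Cref{prop:boringcalculation}, $|S_k|\gtrsim M$, and the hypothesis $2^{n/r+n/r^2}\le 2^{-n/6r}M/50$ then makes the deleted fraction at most about $2^{-n/6r}/50$. Each deleted term changes the average by at most $2/|S_k|$. So on the surviving multiset $S_k'=\bigcup_{v,j}B_{k,v}^{(j)}$ the average of $p_T$ is at least $2\beta-2^{-n/6r}/25$.

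\textbf{Step 2 (Cauchy--Schwarz on sub-buckets).} All sub-buckets have the same size $L=2^{n/r^2}$. Let $N$ be the number of sub-buckets and $\sigma_j=\sum_{T\in B^{(j)}}p_T$. Then
\[
\sum_{T\ne T'\in B^{(j)}}p_Tp_{T'}=\tfrac12(\sigma_j^2-L),
\]
and Cauchy--Schwarz gives $\frac1N\sum_j\sigma_j^2\ge(\frac1N\sum_j\sigma_j)^2=L^2\mu^2$, where $\mu$ is the average from Step~1. Hence, before pruning, the average sign over the pairs is at least
\[
\frac{L\mu^2-1}{L-1}\ge \mu^2-\frac{1}{L}.
\]

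\textbf{Step 3 (pruning).} Step~\ref{step:prune} removes at most a $2^{-n/6r}$ fraction of the pairs, so the average drops by at most $2\cdot 2^{-n/6r}$. The resulting bias of $S_{k+1}$ at $x$ is therefore at least
\[
\tfrac12\bigl(\mu^2-2^{-n/r^2}-2^{1-n/6r}\bigr)\ge 2\beta^2-O(\beta)\,2^{-n/6r}-2^{-n/r^2}.
\]

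\textbf{Main obstacle.} The last step is showing this is $\ge\beta^2$, i.e.\ that the error terms are absorbed by the factor-$2$ slack between $2\beta^2$ and $\beta^2$. This needs $\beta^2\ge 2^{-n/r^2}+O(2^{-n/6r})$. Since $\beta\ge 2^{-n/4r^2}$, we have $\beta^2\ge 2^{-n/2r^2}$, and for $r\ge4$ this dominates both $2^{-n/r^2}$ and $2^{-n/6r}$. So the bookkeeping of the deletion fractions is the delicate part, and the hypothesis on $M$ is exactly what keeps the Step~1 loss small relative to $\beta$.
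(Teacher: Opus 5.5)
Your proposal is correct and is essentially the paper's proof. Both fix the optimal $x$, charge the at most $2^{n/r+n/r^2}$ trimmed elements against the hypothesis on $M$, apply Cauchy--Schwarz inside the equal-size sub-buckets $B_{k,v}^{(j)}$, and absorb the pruning loss in the factor-two slack using $\bias(A,b,S_k)\ge 2^{-n/4r^2}$ with $r\ge 4$; the paper works with satisfied-counts $c_{j,v}$ and $\binom{c_{j,v}}{2}+\binom{2^{n/r^2}-c_{j,v}}{2}$ where you use $\pm1$ sums $\sigma_j$. One small slip: $\frac{L\mu^2-1}{L-1}\ge \mu^2-\frac1L$ needs $\mu^2\ge 1/L$ (the unconditional bound is $\mu^2-\frac{1}{L-1}$), but this holds here since $\mu\approx 2\beta\ge 2\cdot 2^{-n/4r^2}$.
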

\begin{proof}[Proof of \Cref{lem:eachiteration}]
    Let $\gamma=\bias(A,b,S_k)$ and pick some $x\in \F_2^n$ such that $\ip{\sum_{t\in T}a_t,x}=f_k(T)$ for exactly a $(\frac12+\gamma)$ fraction of all $T\in S_k$. First, note that in \Cref{step:formbkvs}, we delete at most $2^{n/r+n/r^2}\leq 2^{-n/6r}M/50\leq \gamma M/50$ total elements of $S_k$. Let $S_k'$ be the set of all remaining elements after deletion.  For all $v\in \F_2^s$ and $j\in [t_{k,v}]$, let $c_{j,v}$ be the total number of $T\in \bjk$ such that $\ip{\sum_{t\in T}a_t,x}=f_k(T)$, and observe that $\sum_{v}\sum_j c_{j,v}\geq (\frac12+0.99\gamma)|S_k'|$.
    
    Then for all $T^{(1)}\neq T^{(2)}\in \bjk$, we have $\ip{\sum_{t\in T^{(1)}\oplus T^{(2)}}a_t,x}=f_{k+1}(T^{(1)}+T^{(2)})=f_k(T^{(1)})+f_k(T^{(2)})$ if and only if both $\ip{\sum_{t\in T^{(1)}}a_t,x}=f_k(T^{(1)})$ and $\ip{\sum_{t\in T^{(2)}}a_t,x}=f_k(T^{(2)})$, or both $\ip{\sum_{t\in T^{(1)}}a_t,x}\neq f_k(T^{(1)})$ and $\ip{\sum_{t\in T^{(2)}}a_t,x}\neq f_k(T^{(2)})$. Thus the total number of $T'\in S_{k+1}$ such that $\ip{\sum_{t\in T'}a_t,x}=f_{k+1}(T')$ is, by Cauchy-Schwarz,
    \begin{align*}
        \sum_{\substack{v\in F_2^s \\ 1\leq j\leq t_{k,v}} }{c_{j,v}\choose 2}
        &+ {2^{n/r^2}-c_{j,v}\choose 2}
        \\&=  \sum_{\substack{v\in F_2^s \\ 1\leq j\leq t_{k,v}} } c_{j,v}^2 - 2^{n/r^2}\sum_{\substack{v\in F_2^s \\ 1\leq j\leq t_{k,v}} } c_{j,v}+\frac{|S_k'|2^{n/r^2}-|S_k'|}{2}
        \\&\geq \left(\frac12+0.99\gamma\right)^2|S'_k|^2\frac{2^{n/r^2}}{|S'_k|}- 2^{n/r^2}\left(\frac{1}{2}+0.99\gamma\right)|S'_k|+\frac{|S'_k|2^{n/r^2}-|S_k'|}{2}
        \\&\geq \left( \frac{1}{2}+1.5\gamma^2\right)\frac{|S'_k|}{2^{n/r^2}}{2^{n/r^2}\choose 2}
        =\left(\frac12+1.5\gamma^2\right)|S_{k+1}|\mper
    \end{align*}
    Finally, note that if \Cref{step:prune} does not terminate, we delete only a $2^{-n/6r}$-fraction of $S_{k+1}$, so a $\geq\frac12+\gamma^2$ fraction of all $a'\in S_{k+1}$ post-pruning satisfy $\ip{\sum_{t\in T'}a_t,x}=f_{k+1}(T')$. Thus $\bias(A,b,S_{k+1})\geq \bias(A,b,S_k)^2$.
\end{proof}
Using this, we prove \Cref{lem:semirandomcondone}.
\begin{proof}[Proof of \Cref{lem:semirandomcondone}]
    If \Cref{alg:semirandomrefute} terminates early and outputs $\frac12$, then $\val(A,b,\cT)\leq 1=\frac12+\cB$. If $\bias(A,b,\cT)\leq 2^{-n/r^22^{r+1}}$, then $\val(A,b,\cT)\leq \frac12+2^{-n/r^22^{r+1}}=\frac12+\cB$. Otherwise, at the last step of \Cref{alg:semirandomrefute}, we have $\sum_{t\in T}a_t=\mathbf0$ for all $T\in S_r$, so for all $x\in\F_2^n$, $\ip{\sum_{t\in T}a_t,x}=f_r(T)$ if and only if $f_r(T)=0$.
    Therefore, repeatedly applying \Cref{lem:eachiteration}, we have $\frac12+\cB\geq \frac{1}{2}+\bias(A,b,S_r)^{1/2^r}\geq \frac{1}{2}+(\bias(A,b,S_0)^{2^r})^{1/2^r}=\val(A,b,S_0)=\val(A,b,\cT)$.
\end{proof}
We upper bound the number of $T\in S_k$ whose derivation includes some $i\in [M]$ an odd number of times, which we use as a crude upper bound on the number of times a single synthetic equation can appear in $S_k$.
\begin{lemma}\label{lem:gkibound}
    For all $i\in [M]$ and $0\leq k\leq r$, there are at most $2^{nk/r^2}$ total $T\in S_k$ such that $i\in h_k(T)$.
\end{lemma}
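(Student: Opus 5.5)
We argue by induction on $k$, counting the elements of $S_k$ whose $h_k$-label contains a fixed index $i$. Write $N_k(i)\coloneqq |\{T\in S_k : i\in h_k(T)\}|$, where $S_k$ is a multiset and each copy is counted separately. For $k=0$ we have $h_0(T_j)=\{j\}$, so exactly one element of $S_0$ has $i$ in its label: $N_0(i)=1=2^{0}$. Deletions in \Cref{step:formbkvs} and \Cref{step:prune} only remove elements, so they can only decrease $N_k(i)$ and can be ignored.

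For the inductive step, suppose $N_k(i)\le 2^{nk/r^2}$. Every element of $S_{k+1}$ has the form $T^{(1)}\oplus T^{(2)}$ with $T^{(1)}\neq T^{(2)}$ in a common part $\bjk$, and its label is $h_{k+1}=h_k(T^{(1)})\oplus h_k(T^{(2)})$. Since this is a symmetric difference, $i\in h_{k+1}(T^{(1)}\oplus T^{(2)})$ forces $i$ to lie in exactly one of $h_k(T^{(1)})$ and $h_k(T^{(2)})$, and in particular in at least one of them. So every such element of $S_{k+1}$ is charged to a pair containing at least one $T\in S_k$ with $i\in h_k(T)$.

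Each $T\in S_k$ lies in exactly one part $\bjk$, and that part has size exactly $2^{n/r^2}$ because the buckets were trimmed to multiples of $2^{n/r^2}$ and then split evenly. Hence $T$ participates in at most $2^{n/r^2}-1$ pairs. Summing over the at most $2^{nk/r^2}$ labels containing $i$ gives
\[N_{k+1}(i)\le N_k(i)\cdot\bigl(2^{n/r^2}-1\bigr)\le 2^{n(k+1)/r^2}.\]
A pair in which both members contain $i$ is counted twice, but this only helps, since such a pair then has $i\notin h_{k+1}$.

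The one point that needs care is the multiset bookkeeping. We must check that $T^{(1)}\neq T^{(2)}$ refers to distinct copies, so that each unordered pair yields one element of $B^{*(j)}_{k,v}$, and that $h_{k+1}$ is computed from those copies. By the remark on identifying copies, this holds, so the argument applies directly to the multiset $S_k$.
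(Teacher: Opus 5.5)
Your proof is correct and matches the paper's: both induct on $k$ from $h_0(T_j)=\{j\}$, and charge each $T\in S_{k+1}$ with $i\in h_{k+1}(T)$ to the unique member of its generating pair whose label contains $i$, which has at most $2^{n/r^2}-1$ partners in its part $B_{k,v}^{(j)}$. Your remarks on deletions only lowering the count and on distinct multiset copies make explicit points the paper leaves implicit.
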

\begin{proof}[Proof of \Cref{lem:gkibound}]
    We use induction. When $k=0$, there is exactly one $T=T_i\in S_0$ such that $i\in h_0(T_i)=\{i\}$. Now suppose the result holds up to $k$. Let $T\in S_{k+1}$ such that $i\in h_{k+1}(T)$, and suppose $T$ is derived from $T=T'\oplus T''$ for $T',T''\in S_k$, and exactly one of $h_k(T')_i,h_k(T'')_i$ contains $i$. Without loss of generality, let that be $T'$. Then note that there are at most $2^{nk/r^2}$ total $T'$ such that $i\in h_k(T')$, and there are $2^{n/r^2}-1<2^{n/r^2}$
    other $T''$ in the same $\bjk$ as $T'$, so there are at most $2^{nk/r^2}\cdot 2^{n/r^2}=2^{n(k+1)/r^2}$ ways to pick $T'$ and $T''$ to make $T$. This completes the inductive step and we are done by induction.
\end{proof}
Using this, we can bound the probability that the subroutine terminates early (and outputs $\frac12$).
\begin{lemma}\label{lem:fewdeletions}
    If $M\geq 2^{4n/3r+2r}$, then with probability $1-2^{-n/6r}$, \Cref{alg:semirandomrefute} does not terminate prematurely due to \Cref{step:prune}.
\end{lemma}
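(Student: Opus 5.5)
We need to bound the probability that, in some iteration $k$, more than a $2^{-n/6r}$ fraction of $S_{k+1}$ has $h_{k+1}(T)=\{\}$. Such an element is $T^{(1)}\oplus T^{(2)}$ with $h_k(T^{(1)})=h_k(T^{(2)})$. The plan is a first-moment (Markov) bound over the random partitions, followed by a union bound over the $r$ iterations.

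\textbf{Step 1: count colliding pairs before partitioning.} Fix $k$ and condition on everything up to the formation of the buckets $B_{k,v}$. Call an unordered pair $\{T',T''\}$ of distinct elements of the same $B_{k,v}$ \emph{colliding} if $h_k(T')=h_k(T'')$. We bound the total number $N_k$ of colliding pairs. Since $h_k(T')$ is nonempty (empty ones were pruned), it contains some $i\in[M]$. By \Cref{lem:gkibound}, at most $2^{nk/r^2}$ elements of $S_k$ have $i$ in their $h_k$-set. Hence, for each $T'$, the number of $T''$ with $h_k(T'')=h_k(T')$ is at most $2^{nk/r^2}$, and
\[
N_k\leq |S_k|\cdot 2^{nk/r^2}.
\]
For $k=0$ there are no collisions, since the $h_0$-sets are distinct singletons.

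\textbf{Step 2: effect of the random partition.} Each $B_{k,v}$ is split uniformly at random into blocks of size $2^{n/r^2}$. A fixed pair in $B_{k,v}$ lands in the same block with probability at most $2^{n/r^2}/|B_{k,v}|$, which is at most $1$. The crude way to proceed is to use instead the fact that the number of pairs that end up combined is exactly $|S_{k+1}|$, roughly $|S_k|2^{n/r^2}/2$, out of roughly $\sum_v|B_{k,v}|^2/2$ candidate pairs. That route requires balanced buckets, which semirandom $A$ does not give us. I would therefore use the per-pair probability directly:
\[
\E[\#\text{empty elements of }S_{k+1}]\;\leq\;\sum_{\text{colliding pairs in }B_{k,v}}\frac{2^{n/r^2}-1}{|B_{k,v}|-1}.
\]
For each $T'\in B_{k,v}$, its colliding partners lie in the same bucket and number at most $2^{nk/r^2}$. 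So the contribution charged to $T'$ is at most
\[
\min\!\Bigl(2^{nk/r^2},\,|B_{k,v}|\Bigr)\cdot\frac{2^{n/r^2}}{|B_{k,v}|}\;\leq\;2^{n/r^2}.
\]
This is too weak, so I will refine the argument as follows. The expected number of empty elements is $\sum_{T'}(\text{partners of }T')\cdot 2^{n/r^2}/|B_{k,v}|$. Compare this with $|S_{k+1}|\approx |S_k|2^{n/r^2}/2$, using that the partners are constrained by \Cref{lem:gkibound}. Small buckets, whose sizes sum to at most $2^{s}\cdot 2^{nk/r^2}$, contribute at most that total. Large buckets of size at least $2^{nk/r^2}\cdot 2^{n/6r+2}$ contribute at most a $2^{-n/6r-2}$ fraction of their pairs.

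\textbf{Step 3: conclude.} Using $|S_k|\geq 2^{kn/r^2-2k}M/2$ from \Cref{prop:boringcalculation}, the small-bucket contribution is a negligible fraction of $|S_{k+1}|$ once $M\geq 2^{4n/3r+2r}$. This holds because $2^{n/r}\cdot 2^{nk/r^2}\cdot 2^{n/6r}\ll M\,2^{kn/r^2-2k}$, using $k\leq r$. The expected fraction of empty elements is then at most about $2^{-n/3r}$. Markov's inequality gives failure probability at most $2^{-n/6r}/(2r)$ for a given $k$, and a union bound over the $r$ iterations finishes the proof.

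The main obstacle is Step 2. Because $A$ is arbitrary, the bucket sizes $|B_{k,v}|$ can be wildly unbalanced, and a collision inside a small bucket is likely to be realized. The small/large bucket split, together with the exponent budget $4n/3r$ in $M$, is where I expect the calculation to be tight.
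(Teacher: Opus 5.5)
There is a genuine gap in Steps 2--3. Charging each small bucket for everything it produces cannot give an expected empty fraction of order $2^{-n/3r}$ under the hypothesis $M\geq 2^{4n/3r+2r}$, and your own numbers show this.

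With your threshold $\tau=2^{nk/r^2+n/6r+2}$, large buckets have a colliding-pair fraction of up to $2^{-n/6r-2}$. The small buckets have total size up to $2^{n/r}\tau$ (not $2^{n/r}2^{nk/r^2}$ as you wrote). That is up to about a $2^{7n/6r+2k+3}/M\leq 2^{-n/6r+2k+3-2r}$ fraction of $S_k$, and roughly the same fraction of $S_{k+1}$. For $k=r-1$ this is $2^{1-n/6r}$, which is above the pruning threshold itself.

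So the expected fraction you can certify is $\Theta(2^{-n/6r})$, not $2^{-n/3r}$. Markov at threshold $2^{-n/6r}$ then yields only a constant failure probability. Retuning $\tau$ does not help. Write $x=\tau/2^{nk/r^2}$; the two contributions are roughly $2^{n/r+2k+1}x/M$ and $1/x$. Their sum is at least $2\sqrt{2^{n/r+2k+1}/M}\approx\sqrt{2}\cdot 2^{-n/6r}$ at $k=r-1$. Your split argument would therefore need $M$ larger by a factor of at least $2^{n/6r}$.

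The missing idea is a cancellation you already had in Step 2. You discarded it by bounding each element's charge by $2^{n/r^2}$ and then summing over all of $S_k$. Sum within a bucket instead:
\begin{itemize}
\item By \Cref{lem:gkibound}, $B_{k,v}$ contains at most $2^{kn/r^2-1}|B_{k,v}|$ colliding pairs.
\item Each colliding pair is realized with probability $\frac{2^{n/r^2}-1}{|B_{k,v}|-1}\leq 2^{n/r^2}/|B_{k,v}|$.
\item So the expected number of empty elements coming from $B_{k,v}$ is at most $2^{(k+1)n/r^2-1}$, whatever $|B_{k,v}|$ is.
\end{itemize}
Summing over the $2^{n/r}$ buckets gives at most $2^{n/r+(k+1)n/r^2-1}$ expected empty elements. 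By \Cref{prop:boringcalculation}, $|S_{k+1}|\geq 2^{(k+1)n/r^2-2(k+1)}(M-(k+1)2^{n/r+n/r^2})$. Dividing gives an expected fraction of about $2^{n/r+2k+1}/M\leq 2^{-n/3r-1}$.

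Markov at threshold $2^{-n/6r}$ then bounds the failure probability in iteration $k$ by about $2^{-n/6r+2k+1-2r}$. These bounds sum geometrically over $k$. This is exactly the paper's proof. Unbalanced buckets are harmless: a small bucket realizes its collisions with high probability, but it has few collisions to begin with.
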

\begin{proof}[Proof of \Cref{lem:fewdeletions}]
    Let $0\leq k\leq r-1$. We first bound the probability that the algorithm terminates at \Cref{step:prune} in iteration $k$. Suppose that $h_{k+1}(T)=\{\}$ for some $T\in S_{k+1}$. Then suppose $T$ is constructed by $T=T'+T''$ for $T',T''\in S_{k}$. Then note that $h_{k+1}(T)=h_k(T')+h_k(T'')$ and $h_k(T'),h_k(T'')\neq \{\}$, so we must have $h_k(T')=h_k(T'')\neq \{\}$. Then $h_k(T')$ must contain a nonzero coordinate, and by \Cref{lem:gkibound}, this coordinate is nonzero for at most $2^{nk/r^2}$ total $T'\in S_k$. Therefore every value of $h_k(T')$ can occur at most $2^{nk/r^2}$ times.

    Now, we will bound the fraction of pairs $T',T''$ in the same $\bjk$ such that $h_k(T')=h_k(T'')$. Let $X$ be the set of all unique $h_k(T')$, and for all $v\in \F_2^r, x\in X$ let $\wvx$ denote the number of $T\in B_{k,v}$ such that $h_k(T)=x$, so $\wvx\leq 2^{kn/r^2}$. Fix some $v\in \F_2^r$. Then we have $\sum_{x\in X}\wvx = |B_{k,v}|=2^{n/r^2}t_{k,v}$. Furthermore, note that the total number of $T',T''\in B_{k,v}$ such that $h_k(T')=h_k(T'')$ is
    \begin{align*}
        \sum_{x\in X}{\wvx\choose 2}\leq \sum_{x\in X}\frac{2^{kn/r^2}\wvx}{2}=2^{kn/r^2-1}\cdot |B_{k,v}|\mcom
    \end{align*}
    and each such pair ends up in the same $\bjk$ with probability $\frac{2^{n/r^2}-1}{|B_{k,v}|-1}$ when we do random partitioning. Therefore, the expected number of pairs with equal values of $h_k(\cdot)$ and which lie in the same $\bjk$ is 
    \begin{align*}
        \sum_{v\in \F_2^s}\frac{2^{n/r^2}-1}{|B_{k,v}|-1}\cdot 2^{kn/r^2-1}|B_{k,v}|
        \leq \sum_{v\in \F_2^s}\frac{2^{n/r^2}}{|B_{k,v}|}\cdot 2^{kn/r^2-1}|B_{k,v}|=2^{n/r}\cdot 2^{(k+1)n/r^2-1}\leq \frac{2^{n/r+2k+2}}{M}|S_{k+1}|\mcom
    \end{align*}
    where the last inequality uses \Cref{prop:boringcalculation} and the lower bound on $M$. Therefore, by Markov, the probability that we delete more than a $2^{-n/6r}$ fraction of elements of $S_{k+1}$ is
    \begin{align*}
    \frac{2^{n/r+(k+1)n/r^2-1}}{|S_{k+1}|\cdot 2^{-n/6r}}
    &\leq \frac{3\cdot 2^{n/r+(k+1)n/r^2-1+n/6r}}{2^{(k+1)n/r^2-2(k+1)}\left(|S_0|-(k+1)2^{n/r+n/r^2}\right)}
    \\&\leq O(1)\cdot\frac{2^{7n/6r+2k+1}}{|S_0|}
    \leq O(1)\cdot\frac{2^{7n/6r+2r}}{M}\mper
    \end{align*}
    Union bounding over all $r$ values of $k$ gives the desired result.
\end{proof}
\begin{lemma}\label{lem:fewdeletions2}
    If $M\geq 2^{4n/3r+2n/r^2+5r}$, then with probability $1-24\cdot 2^{-n/6r}$, \Cref{alg:semirandomrefute} does not terminate prematurely due to \Cref{step:finalcount}.
\end{lemma}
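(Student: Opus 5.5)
We need to show that, with probability at least $1-24\cdot 2^{-n/6r}$, the number of unordered pairs $T'\neq T''\in S_r$ with $h_r(T')=h_r(T'')$ is at most a $\frac1M 2^{-n/6r}$ fraction of all $\binom{|S_r|}{2}$ pairs. The plan is to bound the expected number of such ``colliding'' pairs, in the spirit of \Cref{lem:fewdeletions}, and then apply Markov.

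\textbf{Step 1: reduce to a count at the last stage.} We may condition on the algorithm not having terminated at \Cref{step:prune}; by \Cref{lem:fewdeletions} this fails with probability at most $2^{-n/6r}$, since $2^{4n/3r+2n/r^2+5r}\geq 2^{4n/3r+2r}$. On this event every $T\in S_r$ has $h_r(T)\neq\{\}$.

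\textbf{Step 2: multiplicity bound.} Fix a nonempty set $x$ and pick some $i\in x$. By \Cref{lem:gkibound}, at most $2^{rn/r^2}=2^{n/r}$ elements $T\in S_r$ satisfy $i\in h_r(T)$. So the value $x$ occurs at most $2^{n/r}$ times among $\{h_r(T)\}_{T\in S_r}$. Writing $w_x$ for its multiplicity, the number of colliding pairs is
\[
\sum_x \binom{w_x}{2}\leq 2^{n/r-1}|S_r|.
\]
This bound is deterministic. It suffices if
\[
2^{n/r-1}|S_r|\leq \frac{2^{-n/6r}}{M}\cdot\frac{|S_r|^2}{4},
\quad\text{i.e.}\quad
|S_r|\geq 2M\,2^{7n/6r}.
\]
By \Cref{prop:boringcalculation},
\[
|S_r|\geq 2^{n/r-2r}\bigl(M-r2^{n/r+n/r^2}\bigr)\geq 2^{n/r-2r-1}M
\]
under the hypothesis on $M$. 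This exceeds $2M\,2^{7n/6r}$ only if $2^{-n/6r}$ is swamped, which it is not. So this crude bound loses a factor of roughly $2^{n/6r+2r}$, and the argument must be sharpened.

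\textbf{Step 3: sharpen via the randomness of the partition (the main obstacle).} Instead of the crude count, we bound the expected number of colliding pairs by tracking how collisions propagate. Suppose $h_r(T')=h_r(T'')$ with $T'=U_1\oplus U_2$ and $T''=U_3\oplus U_4$ formed at stage $r-1$. Then
\[
h_{r-1}(U_1)\oplus h_{r-1}(U_2)=h_{r-1}(U_3)\oplus h_{r-1}(U_4).
\]
There are two cases.

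\emph{Degenerate case:} $\{h(U_1),h(U_2)\}=\{h(U_3),h(U_4)\}$ as multisets. This reduces recursively to stage-$(r-1)$ collisions or to sharing an element.

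\emph{Genuine case:} a new 4-term relation. Each such relation requires that pairs which are independently uniformly placed in random sub-buckets land in the same sub-bucket. Each such event has probability about $2^{n/r^2}/|B_{k,v}|$, as in the proof of \Cref{lem:fewdeletions}.

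Unrolling over $k$ should give an expected number of colliding pairs of order
\[
\frac{2^{O(r)}\,2^{n/r+n/r^2}}{M}\,|S_r|^2\cdot\frac{1}{|S_r|}\cdot\text{poly}
\]
I would aim concretely for
\[
\mathbb E[\#\text{collisions}]\leq \frac{2^{4n/3r+2n/r^2+5r}}{M^2}\cdot\frac{|S_r|^2}{\text{const}}\cdot 2^{-n/6r}\cdot 2^{-n/6r}.
\]
Markov's inequality with the stated lower bound on $M$ then yields a failure probability of at most $23\cdot 2^{-n/6r}$. Adding the $2^{-n/6r}$ from Step 1 gives the claimed $24\cdot 2^{-n/6r}$.

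The delicate part is the recursion over $k$. Bounding the number of triples and quadruples with $h$-relations uses \Cref{lem:gkibound} at each level, and it must be checked that the factor $2^{2n/r^2+5r}$ in the hypothesis on $M$ absorbs the accumulated constants $4^k$ from \Cref{prop:boringcalculation}.
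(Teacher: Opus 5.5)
\textbf{There is a genuine gap: Step 3, which carries the whole content of the lemma, is never carried out.}

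Your Steps 1 and 2 are fine, and you correctly see that the deterministic multiplicity bound alone loses a factor of about $2^{n/6r+O(r)}$. Step 1 is not even needed: if the algorithm stops at \Cref{step:prune} it never reaches \Cref{step:finalcount}, and \Cref{lem:termination} adds the two failure probabilities separately. The problem is Step 3. Your displayed ``target'' for $\E[\#\text{collisions}]$ is chosen precisely so that Markov returns the stated probability; it is not derived. The recursion over $k$ with a degenerate/genuine case split is only described, and you leave the control of the accumulated factors as something that ``must be checked''. As written, nothing bounds the expected number of colliding pairs.

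The missing idea is that no recursion is needed. A one-level count at the final partition suffices, with \Cref{lem:gkibound} at level $r-1$ playing the role of your Step 2.
\begin{itemize}
\item Write a colliding pair as $U_1\oplus U_2$ and $U_3\oplus U_4$ with $U_j\in S_{r-1}$. Fix $i\in h_r(U_1\oplus U_2)=h_r(U_3\oplus U_4)$; up to a factor $4$, assume $i\in h_{r-1}(U_1)\cap h_{r-1}(U_3)$.
\item Let $z_{v,i}$ be the number of $U\in B_{r-1,v}$ with $i\in h_{r-1}(U)$. With $U_1\in B_{r-1,u}$ and $U_3\in B_{r-1,v}$, there are $z_{u,i}z_{v,i}$ ways to pick $U_1,U_3$.
\item $U_2$ is one of the fewer than $2^{n/r^2}$ other members of $U_1$'s sub-bucket.
\item Then $h_{r-1}(U_4)$ is forced to equal $W=h_{r-1}(U_1)\oplus h_{r-1}(U_2)\oplus h_{r-1}(U_3)$. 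Either $W=\{\}$ and no valid $U_4$ exists, or by \Cref{lem:gkibound} there are at most $2^{n/r}$ candidates. Each lands in $U_3$'s sub-bucket with probability at most $2^{n/r^2}/|B_{r-1,v}|$.
\end{itemize}
Now use $\sum_u z_{u,i}\le 2^{n/r}$ (again \Cref{lem:gkibound}) and $\sum_i z_{v,i}\le 2^{r-1}|B_{r-1,v}|$ (since $|h_{r-1}(U)|\le 2^{r-1}$). The bucket sizes cancel, and summing over the $2^{n/r}$ values of $v$ gives
\[
\E[\#\text{collisions}]\le 4\sum_i\sum_{u,v}z_{u,i}z_{v,i}\,2^{n/r^2}\cdot 2^{n/r}\cdot\frac{2^{n/r^2}}{|B_{r-1,v}|}\le 2^{1+r+2n/r^2+3n/r}\le \frac{2^{3+5r+2n/r^2+n/r}}{M^2}\,|S_r|^2 .
\]
The last step uses $|S_r|\ge 2^{n/r-2r-1}M$ from \Cref{prop:boringcalculation}. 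This is exactly the shape of your target. Markov against the threshold $\frac{2^{-n/6r}}{M}\binom{|S_r|}{2}$, with $M\ge 2^{4n/3r+2n/r^2+5r}$, then gives failure probability at most $16\cdot 2^{-n/6r}$.

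Your ``degenerate'' case (e.g.\ $U_3=U_1$ and $h_{r-1}(U_4)=h_{r-1}(U_2)$) is covered by the same count, because only the $h$-value of $U_4$ is constrained. No information about collisions at earlier stages, and no tracking of $4^k$ factors across iterations, is ever needed.
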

\newcommand{\aone}{a^{(1)}}
\newcommand{\atwo}{a^{(2)}}
\newcommand{\brk}{B_{r-1,v}^{(j)}}
\newcommand{\brp}{B_{r-1,v}^{(j')}}
\begin{proof}[Proof of \Cref{lem:fewdeletions2}]    
    We will bound the expected number of pairs $T',T''\in S_r$ such that $h_r(T')=h_r(T'')$. We begin by conditioning on $S_{r-1}$ and considering $T^{(1)},T^{(2)}\in S_{r-1}$ such that $T=T^{(1)}\oplus T^{(2)}$. Then note that $T\in S_{r}$ cannot occur unless $T^{(1)},T^{(2)}$ lie in the same pre-partitioned bucket, i.e. $T^{(1)},T^{(2)}\in B_{r-1,v}$ for some $v\in \F_2^s$. If they do lie in the same $B_{r-1,v}$, then we have $T\in S_r$ if $T^{(1)},T^{(2)}$ lie in the same $\brk$, which occur with probability at most $2^{n/r^2}/|B_{r-1,v}|$.

    Now, let us bound the number of $T^{(1)},T^{(2)},T^{(3)},T^{(4)}\in S_{r-1}$ where $\{T^{(1)},T^{(2)}\}\neq \{T^{(3)},T^{(4)}\}$ such that $h_r(T^{(1)}\oplus T^{(2)})=h_r(T^{(3)}\oplus T^{(4)})\neq \emptyset$. First, define $z_{v,i}$ to be the number of $T\in B_{r-1,v}$ such that $i\in h_{r-1}(T)$. Then we can bound the number of valid $T^{(1)},T^{(2)},T^{(3)},T^{(4)}$ by first picking some $i\in U\coloneqq h_r(T^{(1)}\oplus T^{(2)})=h_r(T^{(3)}\oplus T^{(4)})$, and noting that this implies that $i$ is in exactly one of $h_{r-1}(T^{(1)}),h_{r-1}(T^{(2)})$ and exactly one of $h_{r-1}(T^{(3)}),h_{r-1}(T^{(4)})$. There are $4$ ways to choose which two contain $i$, and without loss of generality, let $i\in h_{r-1}(T^{(1)}),h_{r-1}(T^{(3)})$. Then we can pick $u,v$ such that $T^{(1)}\in B_{r-1,u}$, $T^{(3)}\in B_{r-1,v}$, in which case there are $z_{v,i}$ and $z_{u,i}$ ways to pick $T^{(1)}$ and $T^{(3)}$, respectively. 
    
    Now conditioning on the value of $j$ such that $T^{(1)}\in \brk$, there are $2^{n/r^2}-1<2^{n/r^2}$ choices for $T^{(2)}$ in $\brk$. Then it remains to pick $T^{(4)}$. But note that $h_{r-1}(T^{(4)})=h_{r-1}(T^{(1)})\oplus h_{r-1}(T^{(2)})\oplus h_{r-1}(T^{(3)})\coloneqq W$, and either $W=\emptyset$, in which case there is no valid $T^{(4)}$, or there exists some $i\in W$, and thus by \Cref{lem:gkibound}, there are at most $2^{n(r-1)/r^2}\leq 2^{n/r}$ potential choices for $T^{(4)}$, the probability that each lies in the same $B^{(j')}_{k,u}$ as $T^{(3)}$ is at most $2^{n/r^2}/|B_{k,u}|$. Therefore, the expected number of $T^{(1)},T^{(2)},T^{(3)},T^{(4)}$ such that $h_r(T^{(1)}\oplus T^{(2)})=h_r(T^{(3)}\oplus T^{(4)})\neq \emptyset$ is at most
    \begin{align*}
        4\sum_{i}\sum_{u,v}z_{v,i}z_{u,i}\cdot 2^{n/r^2+n/r}\frac{2^{n/r^2}}{|B_{r-1,v}|}
        &\leq 2^{2+2n/r^2+n/r}\sum_{v}\sum_i\sum_{u}z_{u,i}\frac{z_{v,i}}{|B_{r-1,v}|}
        \\&\leq 2^{2+2n/r^2+n/r}\sum_v\frac{1}{|B_{r-1,v}|}\sum_i 2^{n/r}z_{v,i}
        \\&\leq 2^{2+2n/r^2+2n/r}\sum_v\frac{1}{|B_{r-1,v}|}2^{r-1}|B_{r-1,v}|
        \\&\leq 2^{1+r+2n/r^2+3n/r}
        \\&\leq \frac{2^{3+5r+2n/r^2+n/r}}{M}\cdot \frac{|S_r|^2}{M}\mcom
    \end{align*}
    so by Markov, \Cref{step:finalcount} fails with probability at most $24\cdot 2^{-n/6r}$. 
\end{proof}
Now we can prove \Cref{lem:termination}.
\begin{proof}[Proof of \Cref{lem:termination}]
    Note that if $\cT$ does not contain any length $\leq 2^{r+1}$, then \Cref{step:elimecs} and \Cref{step:finalelimec} will never cause the algorithm to terminate, as every $T\in S_k$ is the direct sum of at most $2^{k+1}$ distinct $T_i$s. By \Cref{lem:fewdeletions} and \Cref{lem:fewdeletions2}, the algorithm does not terminate at an iteration of \Cref{step:prune} or \Cref{step:finalcount} with probability $1-25\cdot 2^{-n/6r}$.
\end{proof}
Finally, we prove \Cref{lem:epsbound}.
\begin{proof}[Proof of \Cref{lem:epsbound}]
    If \Cref{alg:semirandomrefute} does not terminate prematurely, at the final step, we have $g_r(T)\neq \{\}$ for all $T\in S_r$. Therefore, for all $T\in S_r$, $f(T)=\sum_{i\in g_r(T)}b_i$ is a Bernoulli$(\frac12)$ random variable. Furthermore, $f_r(T)$ is pairwise independent of all $f_r(T')$ such that $g_r(T)\neq g_r(T')$. Since $g_r(T)=g_r(T')$ only when $h_r(T)= h_r(T')$ due to passing \Cref{step:finalelimec}, letting $X=\sum_{T\in S_r}(1-f(T))$ be the number of $T\in S_r$ such that $f(T)=0$, we have
\begin{align*}
    \Var\left[X\right]
    &=\E\left[\sum_{T,T'\in S_r}\left(f_r(T)-\frac12\right)\left(f_r(T')-\frac12\right)\right]
    \\&=\sum_{\substack{T, T'\in S_r\\ g_r(T)= g_r(T')}}\E\left[\left(f_r(T)-\frac12\right)^2\right]
    \leq \frac{|S_r|(1+2^{-n/6r}|S_r|/M)}{4}\leq \frac{|S_r|^2}{3M2^{n/6r}} \mcom
\end{align*}
where \Cref{prop:boringcalculation} guarantees that $|S_r|\geq 3M2^{n/6r}$. Thus, by Chebyshev, we have 
    \begin{align*}
        \Pr\left[X\geq \left( \frac12+\eps^{2^r}\right)|S_r|\right]
        \leq \frac{\Var(X)}{\eps^{2^{r+1}}|S_r|^2}\leq \frac{1}{3M\eps^{2^{r+1}}2^{n/6r}}\mper
    \end{align*}
\end{proof}
\subsection{Proofs of \Cref{lem:fullinduced,lem:noevencovers}}\label{sec:noevencovers}
Here we prove the lemmas needed to prove \Cref{thm:semirandomgeneral}. First, we show the bound on $\val(A,b)$ in terms of $\val(A,b,{[m]\choose q})$, which we restate below.
\fullinduced*
\begin{proof}
    We prove the contrapositive. Suppose there exists some $x\in\F_2^n$ such that $\ip{a_t,x}=b_t$ for $\geq\frac12+\eps$ fraction of $t\in [m]$. We will show that for all $q\leq\eps^2m$, $\ip{\sum_{t\in T}a_t,x}=\sum_{t\in T}b_t$ for $\geq\frac12+\eps^{q}$ fraction of $T\in {[m]\choose q}$. For all $t\in [m]$, let $y_t=1$ if $\ip{a_t,x}=b_t$ and $-1$ otherwise, and for any $T\subseteq [m]$, let $y_T=\prod_{t\in T}y_t$. First we show by induction that ${m\choose q}^{-1}\sum_{T\in {[m]\choose q}}y_T\geq\eps{m\choose q-1}^{-1}\sum_{T\in {[m]\choose q-1}}y_T\geq 2\eps^q$. Note that this holds for $q=1$ by definition, and if it holds for $q$,
    \begin{align*}
        \frac{1}{{m\choose q+1}}\sum_{T\in {[m]\choose q+1}}y_T
        &=\frac{1}{{m\choose q+1}}\cdot\frac{1}{q+1}\sum_{T\in {[m]\choose q}}y_T\sum_{t\notin T}y_t
        \\&=\frac{1}{{m\choose q+1}}\cdot\frac{1}{q+1}\left(\left(\sum_{T\in {[m]\choose q}}y_T\right)\left(\sum_{t\in [m]}y_t\right)-(m-q+1)\sum_{T\in {[m]\choose q-1}}y_T\right)
        \\&\geq \frac{1}{{m\choose q+1}}\cdot\frac{1}{q+1}\left(2\eps m\sum_{T\in {[m]\choose q}}y_T-(m-q+1)\frac{{m\choose q-1}}{\eps{m\choose q}}\sum_{T\in {[m]\choose q}}y_T\right)
        \\&\geq \left(\frac{2\eps m - \frac{q}{\eps}}{m-q}\right)\frac{1}{{m\choose q}}\sum_{T\in {[m]\choose q}}y_T
        \geq \frac{\eps}{{m\choose q}}\sum_{T\in {[m]\choose q}}y_T\geq 2\eps^{q+1}\mcom
    \end{align*}
    where the last inequality follows from $\frac{q}{\eps}\leq \eps m$. Thus we have
    $\val(A,b,{[m]\choose q})\geq \frac12+\frac12{m\choose q}^{-1}\sum_{T\in {[m]\choose q}}y_T\geq \frac12+\eps^q$ for all $q\leq \eps^2 m$, as desired.
\end{proof}
Next, we show that with high probability, randomly partitioning the complete $q$-uniform hypergraph on $m$ vertices into $p$ parts for carefully chosen $q,p$ produces very few parts which contain short even covers. We restate the full lemma below.
\noevencovers*
\begin{proof}[Proof of \Cref{lem:noevencovers}]
    Fix an arbitrary subhypergraph of the partition. For any $3\leq t\leq \log n$, any even cover of size $t$ uses at most $tq/2$ vertices, so there are at most ${m\choose tq/2}{tq/2\choose q}^t$ potential size-$t$ even covers. The probability that each occurs is less than $(2^{3n/\log\log n}/{m\choose q})^t$, so union bounding over all $t$ and all even covers, the probability that at least one length $\leq \log n$ even cover is in our chosen subhypergraph is 
    \begin{align*}
        \sum_{t=3}^{\log n}{m\choose tq/2}\left(\frac{2^{3n/\log\log n}{tq/2\choose q}}{{m\choose q}}\right)^t
        &\leq\sum_{t=3}^{\log n}\left(2^{3n/\log\log n}\left(\frac{etq}{2m}\right)^{q/2}\right)^t
        \\&\leq \sum_{t=3}^{\log n} (2^{3n/\log\log n}2^{-3n/\log\log n -9n/\gamma t\log n-\log\log n/t})^t
        \\&\leq 2^{-9n/\gamma\log n}\mper
    \end{align*}
    Therefore, by Markov, with probability $1-2^{-n/\gamma\log n}$, at most a $2^{-8n/\gamma\log n}$ fraction of subhypergraphs contain an even cover of size $\leq \log n$.
\end{proof}


\section{Sum-of-Squares Lower Bounds for Refuting Linear Systems}
\label{sec:sos}
In this section, we prove \cref{thm:infsoslowerbound}. Before stating the formal theorem, we recall the standard definition of the degree-$d$ sum-of-squares hierarchy.
\begin{definition}
\label{sosaxioms}
    For any $d\geq 2$, a degree $d$ pseudo-expectation $\pE$ over $\Fits^n$ is a linear functional $\pE: \R[x_1, \ldots, x_n]_{\leq d}\to\R$ satisfying the following properties:
    \begin{enumerate}
        \item (Normalization) $\pE[1] = 1$, 
        \item (Booleanity) $\pE[fx_i^2] = \pE[f]$ for all $i\in[n]$ and $f\in\R[x_1, \ldots, x_n]_{\leq d - 2}$, 
        \item (Positivity) $\pE[f^2]\geq 0$ for all $f\in\R[x_1, \ldots, x_n]_{\leq d/2}$. 
    \end{enumerate}
\end{definition}

The formal theorem that we prove is \Cref{thm:soslowerbound}, stated below.
\begin{restatable}{theorem}{soslowerbound}
\label{thm:soslowerbound}
    Let $(A,b)$ be a linear system with $m=2^{Cn/\log n}$ equations such that each $a_1, \cdots, a_m$ is generated by independently setting each coordinate $(a_{i})_j=1$ with probability $q=\lambda\cdot\frac{8C\ln 2}{\log n}$ for some $1<\lambda\leq o(\log n)$ and $(a_i)_j=0$ with probability $1-q$. For all $1\leq i\leq m$, define the $p_i(x)=\prod_{j:(a_i)_j=1}x_j$. Then with high probability, there exists a degree $\Omega(n)$ pseudoexpectation $\pE$ such that $\pE p_i(x)=(-1)^{b_i}$ for all $i\in [m]$. 
\end{restatable}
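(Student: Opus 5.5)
Following the overview in \Cref{sec:overview}, the plan is to reduce to Schoenebeck's criterion~\cite{Sch08}. If every $\F_2$-derivation of $0=1$ from the equations $\{p_i(x)=(-1)^{b_i}\}$ has width at least $w$, then there is a degree-$\Omega(w)$ pseudoexpectation satisfying all the constraints. A derivation of width $w$ combines two derived equations only when their product has degree at most $w$. To invoke this we will need $w=\Omega(n)$. The initial equations have degree $|\supp(a_i)|\approx qn = O(n/\log n)$, and by a Chernoff bound plus a union bound over $m=2^{Cn/\log n}$ rows, all of them have degree at most $2qn \ll w$, so they are admissible inputs.

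Next we analyze what a derivation must contain. Any derived equation corresponds to some set $S\subseteq[m]$ with left-hand side $\prod_j x_j^{(\sum_{i\in S}a_i)_j}$. A contradiction $0=1$ requires a nonempty $S$ with $\sum_{i\in S}a_i=0$, that is, an even cover. Since $b$ is uniform, such an $S$ yields $0=1$ with probability $1/2$, and it must be derived. The argument therefore has two parts.
\begin{enumerate}
\item \textbf{No short even covers.} With high probability, every nonempty $S$ with $|S|\le 2t$ has $|\sum_{i\in S}a_i|=\Omega(n)$. In particular, $\sum_{i\in S}a_i\ne 0$ for such $S$. For a fixed $S$ of size $s$, each coordinate of $\sum_{i\in S}a_i$ is $1$ independently with probability $\tfrac12(1-(1-2q)^s)$. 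Once $s\ge t\approx \log n/q$, up to constants, this probability is at least $1/4$, so the chance that the weight is below $n/16$ is $2^{-\Omega(n)}$. A union bound over $\binom{m}{s}\le 2^{sCn/\log n}$ sets is the delicate point. The parameter choice $q=\lambda\cdot 8C\ln 2/\log n$ is exactly what makes the per-set small-weight probability beat $m^s$ for $s$ in a suitable interval $[t,2t]$. For small $s<t$ the bias term $(1-2q)^s$ is not small, so a separate direct count is needed. Requiring $\sum_{i\in S} a_i$ to have weight $<\delta n$ costs roughly $(1-q)^{\Omega(sn)}$-type factors, or equivalently each of the $s$ rows must have support almost contained in a set of size $\delta n$ plus the others. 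That gives probability $\exp(-\Omega(qsn))$, which beats $m^s=2^{sCn/\log n}$ precisely because $qn\log n \gtrsim 8C n\ln 2\cdot\lambda$ with $\lambda>1$.
\item \textbf{Width argument.} This is the standard expansion argument of Ben-Sasson--Wigderson type. Define the complexity $\mu$ of a derived equation as the minimal $|S|$ producing it. Complexity is subadditive under combination, and the initial equations have $\mu=1$. The final contradiction has $\mu>2t$ by part 1. So some derived equation has $\mu\in[t,2t]$, since a combination of two equations each of complexity below $t$ has complexity below $2t$. Its left-hand side is $\sum_{i\in S}a_i$ for a minimal $S$ with $|S|\in[t,2t]$, which by part 1 has weight $\Omega(n)$. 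Hence the width is $\Omega(n)$.
\end{enumerate}

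The main obstacle is the union bound in part 1 over all sizes $s\le 2t$ with $m=2^{Cn/\log n}$, particularly choosing $t$ and the weight threshold so that both the small-$s$ regime and the $s\approx t$ regime close. I would handle the first regime by a row-by-row containment count and the second by the near-uniform Chernoff bound. This is where the constraint $\lambda>1$ and the upper bound $\lambda=o(\log n)$ enter; the latter keeps the rows sparse enough that the initial degrees stay below $w$. Combining part 2 with Schoenebeck's theorem then gives the desired degree-$\Omega(n)$ pseudoexpectation with $\pE\, p_i=(-1)^{b_i}$.
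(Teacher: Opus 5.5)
Your architecture is the paper's: \cref{fact:sch}, then a size-doubling argument that forces some derived equation to have index set of size in a window $[t,2t]$, then a probabilistic lemma (the paper's \cref{lem:evencovsizes}) showing such sums have linear weight. Part~2 is correct. Using minimal complexity $\mu$ instead of the paper's tracked index sets is a cosmetic difference. Part~2 only needs two facts: (i) $\sum_{i\in S}a_i\neq 0$ for $0<|S|\le 2t$, and (ii) Hamming weight $\Omega(n)$ for $|S|\in[t,2t]$.

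The gap is Part~1. As you state it (weight $\Omega(n)$ for every nonempty $S$ with $|S|\le 2t$), it is false, and your small-$s$ argument fails. For $|S|=1$ the sum is a single row, of weight $\approx qn=o(n)$, which is exactly what your own first paragraph uses. More generally, for $s\le\delta/(2q)$ the weight is $\mathrm{Bin}(n,p_s)$ with $p_s\le qs\le\delta/2$. So it lies below $\delta n$ with probability $1-o(1)$, not $e^{-\Omega(qsn)}$. For small $s$ you can only claim non-vanishing:
\[
\Pr\Bigl[\sum_{i\in S}a_i=0\Bigr]=\bigl(\tfrac{1+(1-2q)^s}{2}\bigr)^n\le e^{-c\,qsn}\quad\text{for } qs\le 2,\ c\approx 0.33 .
\]
This beats $\binom{m}{s}\le m^s=e^{qsn/(8\lambda)}$ since $\lambda>1$. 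It is the paper's item~2 with $t_1=2/q$.

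A second error is the scale of $t$: it must be $\Theta(1/q)$, not $\log n/q$. At $t\approx\log n/q=\Theta(\log^2 n/\lambda)$, both halves of Part~1 fail outright.
\begin{itemize}
\item By pigeonhole, once $\binom{m}{k}>2^n$, i.e.\ $k=(1+o(1))\log n/C$, two distinct $k$-subsets have equal sums. So an even cover of size at most $2k=O(\log n)\ll 2t$ always exists.
\item The union bound would range over $\binom{m}{t}=2^{\Theta(n\log n/\lambda)}=2^{\omega(n)}$ sets. Each set's failure probability is at least $2^{-n}$, so the bound is useless.
\end{itemize}
Use the paper's window $[1/(2q),1/q]$ instead. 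Then there are at most $m^{1/q}=e^{n/(8\lambda)}$ sets of each size, and each coordinate of the sum is $1$ with probability at least $\tfrac12(1-e^{-1})$. A Chernoff bound at a small linear threshold closes the union bound; the paper uses $0.31(1-1/\sqrt\lambda)\,n$. With Part~1 restated as (i) and (ii) for these parameters, your proof goes through and coincides with the paper's.
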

Note that \cref{thm:soslowerbound} implies that the algorithm in \cref{thm:semirandomfull} is not captured by the sum-of-squares hierarchy, as the algorithm runs in $2^{O(n/\log n)}$ time, whereas by \cref{thm:soslowerbound}, sum-of-squares fails to refute the system of equations in $2^{\Omega(n)}$ time. We will use a result showing that a lower bound on the width of a resolution proof of unsatisfiability, which we define below, implies a lower bound on the SOS degree needed to refute the system.
\begin{definition}
    Given a system of multilinear polynomial equations $\{p_i(x)=b_i\}_{i\in [m]}$ over $\{\pm 1\}$, a \emph{resolution proof of unsatisfiability} is a sequence $(q_{m+1},c_{m+1}), (q_{m+2},c_{m+2}), \cdots (q_R,c_R)$, such that for all $i\geq m+1$, $q_i=q_jq_k$ and $c_i=c_jc_k$ for some $j,k<i$, ending in $(q_R,c_R)=(1,-1)$ or $(-1,1)$. The \emph{width} of a resolution proof of unsatisfiability is the maximum degree over all $q_{m+1}, \cdots, q_R$ after using the identity $x_i^2=1$ for all $i$ to reduce the $q$s to multilinear polynomials.
\end{definition}
\begin{fact}\cite[Lemma~13]{Sch08}\label{fact:sch}
    Let $(A,b)$ be a linear system with $m$ equations and for all $1\leq i\leq m$, define the polynomial $p_i(x)=\prod_{j:(a_i)_j=1}x_j$. Let $w>0$ such that the system $\{p_i(x)=(-1)^{b_i}\}_{i\in [m]}$ has no width-$w$ resolution proof of unsatisfiability. Then as long as $\max_i\deg(p_i)\leq w/2$, there exists a degree $w/2$ pseudoexpectation $\pE$ such that $\pE[p_i(x)]=(-1)^{b_i}$ for all $i\in[m]$.
\end{fact}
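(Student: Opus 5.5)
The statement just given is \cref{fact:sch}, a cited result (Lemma~13 of \cite{Sch08}). I will plan its proof by the standard Schoenebeck/Grigoriev route: build the pseudoexpectation from a linear-algebraic closure of the equations under width-bounded resolution. Set $d\defeq w/2$ and identify each multilinear monomial $x_S$ with its indicator vector $1_S\in\F_2^n$. Multiplying monomials then corresponds to adding vectors, since $x_Sx_T=x_{S\oplus T}$ after reducing with $x_i^2=1$.

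\textbf{Step 1 (width-bounded closure).} Let $\mathcal{L}$ be the smallest set of pairs $(S,c)$, with $|S|\leq w$ and $c\in\Fits$, that contains every $(\supp(a_i),(-1)^{b_i})$ and is closed under the rule: $(S,c),(T,c')\in\mathcal{L}$ and $|S\oplus T|\leq w$ imply $(S\oplus T,cc')\in\mathcal{L}$. Any derivation inside $\mathcal{L}$ is a width-$w$ resolution derivation. The hypothesis that no width-$w$ refutation exists therefore shows that $\mathcal{L}$ is \emph{consistent}: it never contains both $(S,1)$ and $(S,-1)$, and it never contains $(\emptyset,-1)$. So $\mathcal{L}$ defines a partial function $S\mapsto c_S$ on its support $V$.

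\textbf{Step 2 (define $\pE$).} For $|S|\leq d$, set $\pE[x_S]=c_S$ if $S\in V$ and $\pE[x_S]=0$ otherwise, then extend linearly. Normalization holds because $(\emptyset,1)$ is derivable: take any $p_i$ times itself, which has width $0$. Booleanity is automatic because everything is multilinear. Finally, $\pE[p_i]=(-1)^{b_i}$ because $\deg p_i\leq w/2=d$.

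\textbf{Step 3 (positivity), the main obstacle.} I must show that the moment matrix $\mathcal{M}$, indexed by $|S|,|T|\leq d/2$ with entries $\pE[x_Sx_T]=\pE[x_{S\oplus T}]$, is PSD. Since $|S|,|T|\leq d/2$, every product has $|S\oplus T|\leq d\leq w$. The key claim is that $S\sim T \iff S\oplus T\in V$ is an equivalence relation on monomials of degree at most $d/2$.
\begin{itemize}
\item Reflexivity holds because $\emptyset\in V$.
\item Symmetry is immediate from the definition.
\item For transitivity, $S\oplus U=(S\oplus T)\oplus(T\oplus U)$ has size at most $d\leq w$, so a single closure step applies. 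Consistency also gives $c_{S\oplus U}=c_{S\oplus T}c_{T\oplus U}$.
\end{itemize}
Within one equivalence class, fix a representative $T_0$ and set $\sigma_S=c_{S\oplus T_0}$. The consistency identity then gives $c_{S\oplus T}=\sigma_S\sigma_T$. So $\mathcal{M}$ is block diagonal, each block is the rank-one matrix $\sigma\sigma^\top$, and $\mathcal{M}\succeq 0$.

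The delicate point is making sure that every product I form stays within width $w$. This is exactly why the pseudoexpectation only has degree $w/2$, and why the pairwise products $S\oplus T$ of degree-$d/2$ monomials must land in the width-$w$ closure. I would verify this bookkeeping carefully, together with the consistency identity $c_{A\oplus B}=c_Ac_B$ whenever $A$, $B$, and $A\oplus B$ all lie in $V$. If instead $c_{A\oplus B}=-c_Ac_B$, one more closure step would derive the contradiction $(\emptyset,-1)$, again within width $w$.
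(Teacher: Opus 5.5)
Your proposal is correct and is essentially the standard argument behind the cited \cite[Lemma~13]{Sch08}; the paper itself gives no proof beyond that citation. You close the equations under width-$w$ resolution and use the absence of a refutation to get a well-defined sign $c_S$ on derivable sets. You then show that $S\sim T \iff S\oplus T$ derivable is an equivalence relation, so the moment matrix is a direct sum of rank-one blocks $\sigma\sigma^\top$. Your argument also never uses that the $p_i$ have equal degree, matching the paper's remark, and it has a factor of two to spare: indexing rows by $|S|\leq w/2$ still keeps every moment and every transitivity step within width $w$.
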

\begin{remark}
    \cite{Sch08} states the result in terms of $k$-XOR formulas where each clause has size exactly $k$; however, their proof never uses the fact that the clauses have equal or bounded size.
\end{remark}
In our proof, we utilize \Cref{fact:sch} by showing that any resolution derivation of a linear dependence in the set of $a_i$'s must contain a vector with Hamming weight $\Omega(n)$. We start by proving a technical lemma that shows that there is no short linear dependence in the $a_i$'s, and all $s$-combinations of $a_i$'s have large Hamming weight for all $s\in [t,2t]$ for some $t$.
\begin{lemma}\label{lem:evencovsizes}
    Suppose that $m=2^{Cn/\log n}$ and $p=  \lambda\cdot\frac{8C\ln 2}{\log n}$ for some $1<\lambda\leq o(\log n)$, and let $a_1, \cdots, a_m$ be independent random vectors in $\F_2^n$ where each coordinate of each $a_i$ is $1$ with probability $p$. Then there exist $t_1\geq t_2$ such that with high probability, all of the following hold:
    \begin{enumerate}
        \item For all $i\in [m]$, $a_i$ has Hamming weight $\leq 2pn$.
        \item For all $S_1\subseteq[m]$ with $0<|S_1|\leq t_1$, $\sum_{i\in S}a_i\neq \mathbf0$.
        \item For all $S_2\subseteq[m]$ with $\frac{t_2}{2}\leq |S_2|\leq t_2$, $\sum_{i\in S_2}a_i$ has Hamming weight $\Omega(n(1-1/\sqrt{\lambda}))$.
    \end{enumerate}
\end{lemma}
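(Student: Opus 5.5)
The plan is to prove the three items separately with first-moment (union) bounds, using one key computation: for a fixed set $S\subseteq[m]$ with $|S|=s$, the coordinates of $\sum_{i\in S}a_i$ are independent, and each equals $1$ with probability
\[
\frac{1-(1-2p)^s}{2}.
\]
This is the piling-up lemma. Writing $\delta_s\coloneqq(1-2p)^s\approx e^{-2ps}$, each coordinate is $0$ with probability $\frac{1+\delta_s}{2}$.

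Item 1 is immediate. Each $a_i$ has expected weight $pn=\Theta(\lambda n/\log n)$, so a Chernoff bound gives $\Pr[\mathrm{wt}(a_i)>2pn]\leq e^{-pn/3}=2^{-\Omega(\lambda C n/\log n)}$. Since $\lambda>1$ and the constant in $p$ is $8C\ln 2$, this beats the union bound over $m=2^{Cn/\log n}$ vectors. (One checks that $pn/3\cdot\log e>Cn/\log n$; this is where the factor $8\ln 2$ is used.)

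For item 2 I would take $t_1=c\,n/\log n$ for a small constant $c$, or whatever the calculation permits. For fixed $S$ of size $s$, $\Pr[\sum_{i\in S}a_i=\mathbf 0]=\bigl(\frac{1+\delta_s}{2}\bigr)^n$. There are at most $m^s=2^{Csn/\log n}$ such sets, so I need $\bigl(\frac{1+\delta_s}{2}\bigr)^n\ll 2^{-Csn/\log n}$, summed over $s\leq t_1$.

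\begin{itemize}
\item For small $s$ (e.g.\ $s=1$), $\delta_s\approx 1-2ps$. The zero probability is then about $e^{-psn}$, roughly. More precisely, it is at most $(1-p')^n$ with $p'=\Theta(\min(ps,1))$, so it is $2^{-\Omega(\lambda C s n/\log n)}$. This beats $m^s$ because $\lambda>1$ and the constant $8\ln 2$ is chosen for exactly this.
\item For larger $s$, up to $t_1$, $\delta_s$ is bounded away from $1$. The probability is then $2^{-\Omega(n)}$, which beats $2^{Cs n/\log n}$ as long as $s\leq t_1=c\,n/\log n$ with $c$ small relative to the $\Omega(n)$ constant.
\end{itemize}

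The intermediate regime, where $ps\approx 1$, interpolates using $\frac{1+\delta}{2}\leq \exp(-(1-\delta)/2)$ and $1-\delta_s\geq \min(ps,1)/2$.

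For item 3 I would set $t_2=\sqrt{\lambda}/p=\Theta(\log n/\sqrt{\lambda})$, so that for $s\in[t_2/2,t_2]$ we have $\delta_s\leq e^{-\sqrt\lambda}$, which is $o(1)$. Each coordinate of the sum is then $1$ with probability at least $\frac{1-e^{-\sqrt\lambda}}{2}$. A Chernoff bound gives weight at least $\Omega(n(1-1/\sqrt\lambda))$ except with probability $2^{-\Omega(n)}$. The union bound is over at most $m^{t_2}=2^{Cn t_2/\log n}=2^{O(Cn/\sqrt{\lambda})}$ sets, so the failure exponent must dominate $Cn/\sqrt\lambda$. I would instead pick the weight threshold $\alpha n$ so that $D(\alpha\,\|\,\frac{1-\delta}{2})\,n$ exceeds $Ct_2 n/\log n$; since $p\log n=8C\lambda\ln 2$, we have $Ct_2/\log n=C\sqrt\lambda/(p\log n)=\Theta(1/\sqrt\lambda)$. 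For large $\lambda$ this is small, so the threshold can be close to $1/2$, and the $(1-1/\sqrt\lambda)$ factor absorbs the loss.

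Finally I would check $t_1\geq t_2$: $t_1=\Theta(n/\log n)$ while $t_2=O(\log n)$. The main obstacle is item 2 across all scales of $s$, especially the small-$s$ regime. There the union bound $m^s$ must be beaten by $e^{-\Theta(psn)}$ with the correct constants, so the proof hinges on $\lambda>1$ and careful handling of $\log$ versus $\ln$. I would first verify the $s=1,2$ cases numerically in the constants and then use monotonicity of $(1+\delta_s)/2$ in $s$ to handle the rest.
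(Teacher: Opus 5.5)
Items~1 and~3 are fine. Your $t_2=\sqrt{\lambda}/p$ differs from the paper's $t_2=1/p$, but the KL-versus-count comparison you sketch goes through with a small constant threshold. Item~2, however, fails as planned.

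\textbf{The scaling error in item~2.} Your second bullet claims that $2^{-\Omega(n)}$ beats $2^{Csn/\log n}$ for all $s\leq c\,n/\log n$, and this is wrong.
\begin{itemize}
\item The zero-sum probability $\bigl(\frac{1+\delta_s}{2}\bigr)^n$ is never below $2^{-n}$, while the number of $s$-sets is roughly $2^{Csn/\log n}$. So the union bound can only close for $s=O(\log n)$.
\item At $s=c\,n/\log n$ the count is about $2^{Ccn^2/\log^2 n}$. Your own item~3 computation already shows the count reaching $2^{\Theta(n/\sqrt{\lambda})}$ at $s=t_2=\Theta(\log n/\sqrt{\lambda})$.
\item This is not just a loss in the method: item~2 is false for every $t_1=\omega(\log n)$. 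For $s=2\log n/C$ and large $n$, $\binom{m}{s}\geq (m/s)^s>2^n$. So two distinct $s$-subsets have equal sums, and their symmetric difference is a nonempty zero-sum set of size at most $4\log n/C$.
\end{itemize}
So the binding regime is large $s$, not the small-$s$ regime you flagged as the main obstacle.

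\textbf{Consequence for $t_1\geq t_2$.} Your final check is no longer automatic. Once $t_1=O(\log n)$ is forced, $t_1\geq t_2$ is a comparison of constants, not of growth rates; the paper takes $t_1=2/p\geq t_2=1/p$. With your $t_2$, the natural repair is $t_1=t_2=\sqrt{\lambda}/p$, redoing item~2 on that range:
\begin{itemize}
\item Write $x=ps\in(0,\sqrt{\lambda}]$. Then $\ln m^s=xn/(8\lambda)$, and $-\ln\Pr\bigl[\sum_{i\in S}a_i=\mathbf 0\bigr]\geq n\,g(x)$ with $g(x)=\ln\frac{2}{1+e^{-2x}}$.
\item $g$ is concave with $g(0)=0$ and $g(\sqrt{\lambda})\geq g(1)>0.56$. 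The chord bound gives $g(x)\geq 0.56\,x/\sqrt{\lambda}$. This exceeds $x/(8\lambda)\leq 0.125\,x/\sqrt{\lambda}$ by $\Omega(x/\sqrt{\lambda})$.
\item Since $pn/\sqrt{\lambda}=\Theta(\sqrt{\lambda}\,n/\log n)$, the size-$s$ term of the union bound is $e^{-\Omega(sn/\log n)}$. The sum over $1\leq s\leq t_1$ is therefore $o(1)$.
\end{itemize}
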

\begin{proof}

    The first condition follows directly from Chernoff and union bound.

    Let $t_1=\frac{2}{p}$. Consider $S_1=\{s_1,s_2, \cdots, s_{t}\}$ for some $0<t\leq t_1$. Then note that $\aio, \cdots, a_{s_t}$ are uniformly independent random vectors, and for all $i\in [t]$ and $j\in [n]$, the $j$th coordinate $(a_i)_j$ of $a_i$ is an independent Bernoulli$(p)$ random variable. Then for all $j\in [n]$ $\left(\sum_{i\in S_1}a_i\right)_j$ is an independent Bernoulli$\left(\frac{1-(1-2p)^{t_1}}{2}\right)$ random variable. Union bounding over all ${m\choose t}$ possible $S_1$ for all $0<t\leq t_1$, the probability that there exists some $S_1\subseteq[m]$ of size $t_1$ such that all coordinates of $\sum_{i\in S_1}a_i$ are 0 is at most
    \begin{align*}
        \sum_{t=1}^{t_1}{m\choose t}\left(\frac{1+(1-2p)^{t}}{2}\right)^n 
        &\leq \sum_{t=1}^{t_1}{m\choose t}\left(\frac{1+\exp(-2pt)}{2}\right)^n
        \\&\leq \sum_{t=1}^{t_1}\exp\left(\frac{Cnt\ln 2}{\log n}+n\ln(1+e^{-2pt}-n\ln 2)\right)
        \\&\leq \sum_{t=1}^{t_1}\exp(-npt/8)\leq t_1\exp(-np/8)=o(1)\mper
    \end{align*}
Let $t_2=\frac{1}{p}$ and $\gamma=1-\frac{1}{\sqrt{\lambda}}$. Consider $S_2\subseteq [m]$ for some $|S_2|=q$ with $\frac{t_2}{2}\leq q\leq t_2$. Then the Hamming weight of $\sum_{i\in S_2}A_i$ is a Binomial$\left(n, \frac{1-(1-2p)^{q}}{2}\right)$ random variable, and we have
\begin{align*}
    P\coloneqq\frac{1-(1-2p)^{q}}{2}\geq \frac12(1-\exp{(-2p(1/2p))})=\frac12-\frac{1}{2e}\geq 0.31\mcom
\end{align*}
so by a Chernoff Bound, the Hamming weight of $\sum_{i\in S_2}a_i$ is $\geq 0.31\gamma n=\Omega(n(1-1/\sqrt{\lambda}))$ with probability 
\begin{align*}
    1-\Pr\left[\text{Bin}(n,P)\leq 0.31\gamma n\right]\geq 1- \Pr\left[\text{Bin}(n,0.31)\leq 0.31\gamma n\right]
    &\geq 1-\exp\left(-n\cdot D_{KL}(0.31\gamma,0.31)\right)
    \\&\geq 1-\exp\left(-\frac{0.1922}{\lambda}n\right)\mper
\end{align*}
Union bounding over all possible $S_2$, we see that the third condition holds with probability
\begin{align*}
    1-\frac{t_2}{2}{m\choose t_2}\exp\left(-\frac{0.1922}{\lambda}n\right)
    &\geq 1-\exp\left(C\ln 2 \cdot\frac{1}{8\lambda C\ln 2}n+O(\log\log n)-\frac{0.1922}{\lambda}n\right)
    \\&\geq 1-\exp\left(-\Omega(n/\lambda)\right)=1-o(1)\mper
\end{align*}
\end{proof} 
We now prove \Cref{thm:soslowerbound}.

\begin{proof}[Proof of \Cref{thm:soslowerbound}]
    Consider an arbitrary resolution refutation of $\{p_i(x)=(-1)^{b_i}\}_{i\in [m]}$. This contains a sequence of monomials $\prod_{C\in I}x_C$ each with some index set $I$, one of which is some $I'\subseteq[m]$ such that $\prod_{C\in I'}x_C$ is identically $1$, which in turn corresponds to $\sum_{C\in I'}a_C=\mathbf0$. Furthermore, if the index sets are $I_1, \cdots, I_\ell=I'$, then we must have $I_{i_1}=I_{i_2}\oplus I_{i_3}$ for some $i_2,i_3<i_1$, and $|I_1|\leq |I_2|+|I_3|\leq 2\max(|I_2|,|I_3|)$. Thus for all $r\leq |I'|$, there exists some $s\leq \ell$ such that $|I_s|\in \left[\frac{r}{2},r\right]$. 
    
    Let $t_1,t_2$ be the constants from \Cref{lem:evencovsizes}. Then since $\sum_{C\in I'}a_C=\mathbf0$, we must have $|I'|\geq t_1>t_2$, so there exists some $I_r$ such that $|I_r|\in \left[\frac{t_2}{2},t_2\right]$, so $\sum_{C\in I_r}\mathbb{1}_C$ has Hamming weight $\Omega(n)$ and thus the monomial $\prod_{C\in I_r}x_C$ has degree $\Omega(n)$. Thus every resolution refutation of $\psi$ must have width $\Omega(n)$, so the result follows from \Cref{fact:sch}. 
\end{proof}

\section*{Acknowledgments}
We thank Avi Wigderson for helpful discussions and for participating in early stages of this research.

\bibliographystyle{alpha}
\bibliography{lpnr}

\end{document}